\let\oldproof\proof
\let\oldendproof\endproof
\let\proof\relax
\let\proof\oldproof
\let\endproof\oldendproof
\newtheorem{problem}{Problem}
\newtheorem*{fact}{Fact}
\newcommand*\reals{\mathds{R}}
\newcommand*\X[1]{\reals^{d_x^{#1}}}
\newcommand*\U[1]{\reals^{d_u^{#1}}}
\newcommand*\com{\mathrm{com}}
\newcommand*\off{\textsc{off}}
\newcommand*\DEFINED{\coloneqq}
\newcommand*\EXP{\mathds{E}\strut}
\newcommand*\PR{\mathds{P}}
\newcommand*\MATRIX[1]{\begin{bmatrix}#1\end{bmatrix}}
\newcommand*\BLANK{\mathfrak E}
\DeclareMathOperator{\Riccati}{\mathcal{R}}
\DeclareMathOperator{\Gain}{\mathcal{G}}
\newcommand\independent{\protect\mathpalette{\protect\independenT}{\perp}}
\def\independenT#1#2{\mathrel{\rlap{$#1#2$}\mkern2mu{#1#2}}}
\DeclareMathOperator\VVEC{vec}
\DeclareMathOperator\ROWS{rows}
\newcommand*\TRANS{{\mathpalette\doTRANS\empty}}
\newcommand*\doTRANS[2]{\raisebox{\depth}{$\m@th#1\intercal$}}
\begin{document}

\title{Optimal local and remote controllers with
unreliable uplink channels: An elementary proof}

\author{Mohammad~Afshari,~\IEEEmembership{Student Member,~IEEE,}
        and~Aditya~Mahajan,~\IEEEmembership{Senior Member,~IEEE}%
\thanks{This research was supported by NSERC under Discovery Accelerator Grant
493011.}%
\thanks{The authors are with the Department of Electrical and Computer
Engineering, McGill University, Montreal, QC, H3A-0E9, Canada.
Emails: {\tt\small mohammad.afshari2@mcgill.ca, aditya.mahajan@mcgill.ca}.}%
}
\maketitle


\begin{abstract}
  Recently, a model of a decentralized control system with local and remote
  controllers connected over unreliable channels was presented
  in~\cite{Asghari2018}. The model has a non-classical information structure
  that is not partially nested. Nonetheless, it is shown in~\cite{Asghari2018}
  that the optimal control strategies are linear functions of the state estimate 
	(which is a non-linear function of the observations). 
	Their proof is based on a
  fairly sophisticated dynamic programming argument. In this note, we present
  an alternative and elementary proof of the result which uses common
  information based conditional independence and completion of squares.
\end{abstract}

\begin{IEEEkeywords}
  Linear systems, certainty equivalence, separation of estimation and control,
  common information approach, networked control systems
\end{IEEEkeywords}

\section{Introduction}
In a recent paper, a methodology for synthesizing optimal control laws for
local and remote controllers for networked control of a linear system over
unreliable uplink channel was presented~\cite{Asghari2018}. Such models arise
in applications such as temperature control in smart buildings, control of
UAVs, vehicle to infrastructure communication, etc.

The model proposed in~\cite{Asghari2018} is a decentralized control system
with non-classical information structure. Due to the unreliable nature of the
uplink channels, the information structure is not partially nested. Therefore,
one cannot a priori restrict attention to linear strategies. Nonetheless, it
is shown in~\cite{Asghari2018} that the optimal local and remote control laws are
linear functions of the state estimate 	(which is a non-linear 
function of the observations). See Theorem~\ref{thm:optimal} for a precise 
statement of the result.

The proof technique employed in~\cite{Asghari2018} uses ideas from the common
information approach of~\cite{Nayyar2013} to compute the optimal control laws.
Using a conditional independence argument, it is first shown that the local
controllers can ignore the past realization of their local states without any
loss of optimality~\cite[Lemma 1]{Asghari2018}. When attention is restricted
to control strategies with such a structure the resulting information
structure is partial history sharing~\cite{Nayyar2013}. So, in principle, the
common information approach of~\cite{Nayyar2013} is applicable. However, there
are several technical difficulties in extending the argument given
in~\cite{Nayyar2013} for finite valued random variables to continuous random
variables. The key result of~\cite{Asghari2018} is to carefully resolve these
technical difficulties---issues of measurability, existence of well defined
value function, and infinite dimensional strategy space---and then obtain a
closed form solution of the dynamic program.

In this technical note, we provide an alternative and elementary
proof of the result of~\cite{Asghari2018}. Our proof also relies on the
split of total information into common and local information as proposed
in~\cite{Nayyar2013}. However, instead of using the dynamic program
proposed in~\cite{Nayyar2013}, we develop an alternative solution
methodology which relies on (i)~the conditional independence of the local
states given the common information (which was established
in~\cite{Asghari2018}); and (ii)~simplifying the per-step cost based on this
conditional independence, orthogonality principle, and completion of squares. 
The key advantage of this
solution approach is that it completely sidesteps the technical difficulties
with measurability and existence of value functions present in a dynamic
programming based approach.
Given the paucity of positive results in optimal control of decentralized
systems, we believe that a new solution approach is interesting in its own
right.

The model considered in~\cite{Asghari2018} consists of $N$ local controllers
and one remote controller. For ease of exposition, we assume that $N = 2$. It
will be clear from the proof that the steps extend to general
$N$. For the most part, we broadly follow the notation and terminology
of~\cite{Asghari2018}, but we occasionally deviate from it to be consistent
with the standard notation used in linear systems.

\subsection{Notations}

We use superscripts to indicate subsystems/controllers and subscripts to
indicate time. Thus, $x^i_t$ denotes the state of subsystem~$i$ at time~$t$.
The superscript ${}^\TRANS$ denotes transpose (of a vector or a matrix).
$\mathbf{0}_{m \times n}$ is a $m \times n$ matrix with all elements being
equal to zero. We omit the subscript from $\mathbf{0}_{m \times n}$ when 
the dimension is clear from context.
Given column vectors $x$ and $y$, the notation $\VVEC(x,y)$ is a short hand
for the vector formed by stacking $x$ on top of $y$. Given
random variables $x$, $y$, and $z$, the notation $x \independent y \mid z$
indicates that $x$ and $y$ are conditionally independent given $z$. Given
matrices $A$ and $B$ with the same number of columns, $\ROWS(A,B)$ denotes
the matrix obtained by stacking $A$ on top of $B$.

Given matrices $A$, $B$, $Q$, $M$, $R$, and $P$ of appropriate dimensions, we
use the following operators:
 \begin{multline*}
   \Riccati(P, A, B, Q, M, R) = Q + A^\TRANS P A \\
    - (M + A^\TRANS P B)(R + B^\TRANS P B)^{-1}(M + A^\TRANS P B)^\TRANS ,
 \end{multline*}
 \begin{equation*}
   \Gain(P, A, B, M, R) = - (R + B^\TRANS P B)^{-1} (M + A^\TRANS P B)^\TRANS ,
 \end{equation*}
which denote the one step update of the discrete time Riccati equation and the
gain of a linear system, respectively.

\section{Model and Problem Formulation}

\subsection{System dynamics}

Consider a discrete-time linear dynamical system consisting of $N = 2$
subsystems. $x^i_t \in \X{i}$ denotes the state of subsystem~$i$, $i \in \{1,
2\}$. There is a local controller $C^i$ co-located with subsystem~$i$. In
addition, there is a remote controller $C^0$. The information available to
the controllers will be described later. Let $u^i_t \in \U{i}$, $i \in
\{1, 2\}$, denote the control action of local controller~$C^i$ and $u^0_t \in
\U{0}$ denote the control action of remote controller~$C^0$.

The initial state $x^i_0$ of subsystem~$i$, $i \in \{1,2\}$, is random
and the dynamics of subsystem~$i$ is given by
\begin{equation} \label{eq:dynamics}
  x^i_{t+1} = A^{ii} x^i_t +
  \MATRIX{B^{i0} & B^{ii}} \MATRIX{u^0_t \\ u^i_t} +
  w^i_t,
\end{equation}
where $w^i_t \in \X{i}$ is the process noise and $A^{ii}$, $B^{i0}$, and
$B^{ii}$ are matrices of appropriate dimensions. We assume that random variables 
$\{w^1_0, \dots, w^1_{T-1}, w^2_0, \dots,w^2_{T-1} \}$ are independent and have 
zero mean and finite variance. Let $x_t \DEFINED
\VVEC(x^1_t, x^2_t)$, $u_t \DEFINED \VVEC(u^0_t, u^1_t, u^2_t)$, 
and $w_t \DEFINED \VVEC(w^1_t, w^2_t)$ denote the state, control actions, and noise of the
overall system. Then, the system dynamics can be written as
\begin{equation}
  x_{t+1} = A x_t + B u_t + w_t,
\end{equation}
where the matrices $A$ and $B$ are given by
\[
  A = \MATRIX{ A^{11} & 0 \\ 0 & A^{22} }
  \quad\text{and}\quad
  B = \MATRIX{ B^{10} & B^{11} & 0 \\ B^{20} & 0 & B^{22} }.
\]

\subsection{Information structure}

At time~$t$, the local controller $C^i$, $i \in \{1, 2\}$, perfectly observes
the state $x^i_t$ of subsystem~$i$ and sends it to the remote controller $C^0$
over an unreliable packet drop channel. Let $\Gamma^i_t \in \{0, 1\}$ denote
the state of the channel, where $\Gamma^i_t = 0$ means that the channel is in
the off state where the transmitted packet gets dropped while $\Gamma^i_t = 1$
means that the channel is in the on state where the transmitted packet gets
delivered. Thus, $\Gamma^i_t$ is a Bernoulli random variable and we denote the
packet drop probability $\PR(\Gamma^i_t = 0)$ by $p^i$. We use $\Gamma_t$ to
denote $(\Gamma^1_t, \Gamma^2_t)$.

Let $z^i_t$ denote the output of the channel~$i$, $i \in \{1, 2\}$, i.e.,
\begin{equation} \label{eq:channel}
  z^i_t = f(x^i_t, \Gamma^i_t) = \begin{cases}
    x^i_t, & \hbox{if $\Gamma^i_t = 1$} \\
    \BLANK, & \hbox{if $\Gamma^i_t = 0$}
  \end{cases}
\end{equation}
where $\BLANK$ denotes a dropped packet. It is assumed that there are perfect
channels from $C^0$ to $C^1$ and $C^2$. Using these channels, $C^0$ can share
$z_t \DEFINED \VVEC(z^1_t, z^2_t)$ and $u^0_{t-1}$ with local controllers $C^1$ and
$C^2$. Note that it is possible to recover $\Gamma^i_t$ from $z^i_t$. Hence, all
controllers also have access to $\Gamma_t$. The fact that $\Gamma_t$ is available 
at all controllers is critical to derive the main result of the model (presented 
in Theorem~\ref{thm:optimal}).

Let $H^i_t$, $i \in \{0,1,2\}$, denote the information available to controller
$C^i$ to take decisions at time~$t$. Then,
\begin{subequations} \label{eq:info-structure}
\begin{align}
  H^0_t &= \{ z_{0:t}, \Gamma_{0:t}, u^0_{0:t-1}  \}, \\
  H^i_t &= \{ x^i_{0:t}, u^i_{0:t-1}, z_{0:t}, \Gamma_{0:t}, u^0_{0:t-1} \},
  \quad i \in \{1, 2\}.
\end{align}
\end{subequations}
Let $\mathcal{H}^i_t$ be the space of all possible realizations of $H^i_t$.
Then, controller $C^i$ chooses it's control action according to
\begin{equation} \label{eq:control}
  u^i_t = g^i_t(H^i_t), \quad i \in \{0,1,2\},
\end{equation}
where the Borel measurable function $g^i_t \colon \mathcal{H}^i_t \rightarrow
\U{i}$ is called the \emph{control law of controller $C^i$} at time~$t$. The
collection $\mathbf g^i = (g^i_0, \dots, g^i_T)$ is called the \emph{control
strategy of controller $C^i$}. The collection $\mathbf g \DEFINED (\mathbf
g^0, \mathbf g^1, \mathbf g^2)$ is called the \emph{strategy profile of the
system}.

\subsection{System performance and the optimization problem}

The system operates for a finite horizon~$T$. For time $t < T$, the system
incurs a per-step cost
\[
  c_t(x_t, u_t)
  = \MATRIX{ x_t \\ u_t}^\TRANS
  \MATRIX{ Q_t & M_t \\ M_t^\TRANS & R_t}
  \MATRIX{ x_t \\ u_t}
\]
and for the terminal time~$T$, the system incurs a terminal cost
\[
  c_T(x_T) = x_T^\TRANS Q_T x_T,
\]
where $Q_t$, $M_t$, and $R_t$ are matrices of appropriate dimensions. We
assume the following block-wise structure of $Q_t$, $M_t$, and~$R_t$:
\[
  Q_t = \MATRIX{ Q^{11}_t & Q^{12}_t \\ Q^{21}_t & Q^{22}_t },
  \quad
  M_t = \MATRIX{ M^{10}_t & M^{11}_t & M^{12}_t \\ M^{20}_t & M^{21}_t & M^{22}_t },
\]
and
\[
  R_t = \MATRIX{ R^{00}_t & R^{01}_t & R^{02}_t \\
  R^{10}_t & R^{11}_t & R^{12}_t \\ R^{20}_t & R^{21}_t & R^{22}_t }.
\]

The performance of a strategy profile $\mathbf g$ is given by
\begin{equation} \label{eq:cost}
  J(\mathbf g) =
  \EXP^{\mathbf g}
  \Big[ \sum_{t=0}^{T-1} c_t(x_t, u_t) +
  c_T(x_T) \Big],
\end{equation}
where the expectation is with respect to the measure induced on all the system
variables by the choice of strategy profile $\mathbf g$.

The following assumptions are imposed on the system:
\begin{description}
  \item[\textbf{(A1)}] The primitive random variables $\{x^1_0, x^2_0,
      \allowbreak w^1_0, \dots, \allowbreak w^1_{T-1},
      \allowbreak w^2_0, \dots, \allowbreak w^2_{T-1},
      \allowbreak \Gamma^1_0, \dots, \Gamma^1_{T-1},
      \allowbreak \Gamma^2_0, \dots, \Gamma^2_{T-1} \}$ are independent.
    \item[\textbf{(A2)}] The variables $\{x^1_0, x^2_0, w^1_0, \dots,
      w^1_{T-1}, w^2_0, \dots, w^2_{T-1} \}$ have zero mean and finite
      variance. We use $\Sigma^i_t$ and $\Sigma^i_x$ to denote the variance
      of $w^i_t$ and $x^i_0$ respectively.
    \item[\textbf{(A3)}] For each $t$,
      the matrix $\left[ \begin{smallmatrix} Q_t & M_t \\
      M_t^\TRANS & R_t \end{smallmatrix} \right]$ is symmetric and positive
      semi-definite, and the matrix $R_t$ is symmetric and positive definite.
\end{description}

We are interested in the following optimization problem.

\begin{problem} \label{prob:main}
  In the model described above, find a strategy profile $\mathbf g^* =
  (\mathbf g^{*,0}, \mathbf g^{*,1}, \mathbf g^{*,2})$ that
  minimizes~\eqref{eq:cost}, i.e.,
  \[
    J^* \DEFINED J(\mathbf g^*) =
    \inf_{\mathbf g} J(\mathbf g),
  \]
  where the infimum is taken over all strategy profiles of the
  form~\eqref{eq:control}.
\end{problem}

\subsection{Some remarks}

The per-step cost function defined above differs slightly from the per-step
cost function considered in~\cite{Asghari2018} in the following ways:
\begin{itemize}
  \item In~\cite{Asghari2018}, the matrix
    $\left[ \begin{smallmatrix} Q_t & M_t \\ M_t^\TRANS & R_t \end{smallmatrix} \right]$
    was denoted by $R_t$.
    We follow the standard notation here.

  \item In~\cite{Asghari2018}, it was assumed that the performance of a
    strategy profile is
    \[
      \EXP^{\mathbf g}
      \Big[ \sum_{t=0}^{T} c_t(x_t, u_t) \Big].
    \]
    This is effectively the same as assuming that there is no terminal cost (i.e.,
    $Q_{T+1} = 0$) and therefore the terminal control actions $u^i_T$ are $0$ for both local and remote
    controllers. To avoid such triviality, we assume a performance function of
    the form~\eqref{eq:cost}.
\end{itemize}

\section{Main result}

In this section, we restate the main results of~\cite{Asghari2018} but we
present them in a slightly different manner.

\subsection{Common information based estimates}

Following~\cite{Nayyar2013}, we define the \emph{common information}
$H^\com_t$ between agents as
\[
  H^\com_t = H^0_t \cap H^1_t \cap H^2_t.
\]
The information structure of the model~\eqref{eq:info-structure} implies that
$H^\com_t = H^0_t = \{z_{0:t}, \Gamma_{0:t}, u^0_{0:t-1} \}$.

Now we define the common information based ``estimates'' of the state and
control actions and the corresponding ``estimation errors'' as follows:
\begin{align}
  \hat x_t &= \EXP[ x_t \mid H^\com_t],
  & \tilde x_t = x_t - \hat x_t ,
  \label{eq:x}
  \\
  \hat u_t &= \EXP[ u_t \mid H^\com_t],
  & \tilde u_t = u_t - \hat u_t .
  \label{eq:u}
\end{align}

For ease of notation, we use $\hat x^i_t$
to denote the $i_{th}$ component of $\hat x_t$, i.e., $\hat x_t = \VVEC(\hat x^1_t
, \hat x^2_t)$. Similar interpretation holds for $\tilde x^i_t$, $\hat u^i_t$, and 
$\tilde u^i_t$.

It can be shown that the state estimates and the estimation error satisfy the
following property.
\begin{lemma}  \label{lem:estimates}
  The state estimates and estimation errors evolve as follows: for $i \in \{1,
  2\}$,
  \[
    \hat x^i_0 =
    \begin{cases}
      0, & \text{if $\Gamma^i_0 = 0$} \\
      x^i_0, & \text{if $\Gamma^i_0 = 1$}
    \end{cases}
  \]
  and for $t > 0$,
  \[
    \hat x^i_{t+1} =
    \begin{cases}
      A^{ii} \hat x^i_t + B^{i0} u^0_t + B^{ii} \hat u^i_t,
      & \mbox{if $\Gamma^i_{t+1} = 0$}\\
      x^i_{t+1}, &\mbox{if $\Gamma^i_{t+1} = 1$}.
    \end{cases}
  \]
  Therefore,
  \[
    \tilde x^i_0 =
    \begin{cases}
      x^i_0, & \text{if $\Gamma^i_0 = 0$} \\
      0, & \text{if $\Gamma^i_0 = 1$}
    \end{cases}
  \]
  and for $t > 0$,
  \[
    \tilde x^i_{t+1} =
    \begin{cases}
      A^{ii} \tilde x^i_t + B^{ii} \tilde u^i_t + w^i_t,
      & \mbox{if $\Gamma^i_{t+1} = 0$}\\
      0, &\mbox{if $\Gamma^i_{t+1} = 1$}.
    \end{cases}
  \]
\end{lemma}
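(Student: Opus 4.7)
The plan is to prove both claims simultaneously by induction on $t$, handling the two channel states $\Gamma^i_{t+1} \in \{0,1\}$ separately at each step.

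For the base case $t=0$, note that $H^{\com}_0 = \{z_0, \Gamma_0\}$ and, by (A1), $x^i_0$ is independent of $(x^j_0, \Gamma_0)$ for $j \neq i$. Hence the only channel through which $H^{\com}_0$ can depend on $x^i_0$ is $z^i_0$. When $\Gamma^i_0 = 1$ we have $z^i_0 = x^i_0 \in H^{\com}_0$, so $\hat x^i_0 = x^i_0$. When $\Gamma^i_0 = 0$, $z^i_0 = \BLANK$ and $x^i_0$ is independent of $H^{\com}_0$, giving $\hat x^i_0 = \EXP[x^i_0] = 0$ by (A2). The expressions for $\tilde x^i_0$ follow by subtraction.

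For the inductive step, assume the recursion holds up to time~$t$. If $\Gamma^i_{t+1} = 1$, then $z^i_{t+1} = x^i_{t+1} \in H^{\com}_{t+1}$ and so $\hat x^i_{t+1} = x^i_{t+1}$, $\tilde x^i_{t+1} = 0$. If $\Gamma^i_{t+1} = 0$, I substitute the dynamics~\eqref{eq:dynamics} into the definition of $\hat x^i_{t+1}$ and use that $u^0_t$ is measurable with respect to $H^0_t = H^{\com}_t \subseteq H^{\com}_{t+1}$ to obtain
\[
  \hat x^i_{t+1} = A^{ii}\, \EXP[x^i_t \mid H^{\com}_{t+1}] + B^{i0} u^0_t + B^{ii}\, \EXP[u^i_t \mid H^{\com}_{t+1}] + \EXP[w^i_t \mid H^{\com}_{t+1}].
\]
The remaining task is then to show that (a) $\EXP[x^i_t \mid H^{\com}_{t+1}] = \hat x^i_t$, (b) $\EXP[u^i_t \mid H^{\com}_{t+1}] = \hat u^i_t$, and (c) $\EXP[w^i_t \mid H^{\com}_{t+1}] = 0$, after which the claimed formula for $\hat x^i_{t+1}$ follows, and subtracting from~\eqref{eq:dynamics} cancels the $B^{i0} u^0_t$ terms and matches the $A^{ii}$, $B^{ii}$ blocks to yield the recursion for $\tilde x^i_{t+1}$.

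The key observation for (a)--(c) is that $H^{\com}_{t+1}$ augments $H^{\com}_t$ only with $(\Gamma_{t+1}, z_{t+1})$: $\Gamma_{t+1}$ is independent of everything else by (A1); $z^i_{t+1} = \BLANK$ in the current case; and the only non-trivial addition is $z^j_{t+1}$ for $j \neq i$, which equals $x^j_{t+1}$ on $\{\Gamma^j_{t+1} = 1\}$. For (a) and (b) I would invoke the conditional independence of the two subsystems' local trajectories given $H^{\com}_t$ established in~\cite[Lemma~1]{Asghari2018}: conditionally on $H^{\com}_t$, the pair $(x^i_t, u^i_t)$ is independent of $x^j_{t+1}$, so adding $x^j_{t+1}$ to the conditioning leaves their conditional means unchanged. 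For (c), $w^i_t$ is independent of every variable appearing in $H^{\com}_{t+1}$, because subsystem~$j$'s next state depends only on $(x^j_t, u^0_t, u^j_t, w^j_t)$ and none of these are functions of $w^i_t$; thus $\EXP[w^i_t \mid H^{\com}_{t+1}] = \EXP[w^i_t] = 0$ by (A2). The main obstacle is precisely the justification for (a) and (b): absent the structural conditional-independence result, one would have to reprove it here by a separate induction verifying that, conditionally on $H^{\com}_t$, each subsystem evolves as an autonomous process driven only by its own noise.
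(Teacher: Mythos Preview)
Your approach is correct and essentially the same as the paper's: both rely on the conditional independence of the subsystems (Lemma~\ref{lem:indep}) together with the structural restriction (Lemma~\ref{lem:structure}) to argue that the extra information $z^{-i}_{t+1}$, $\Gamma^{-i}_{t+1}$ in $H^\com_{t+1}$ does not affect the conditional distribution of subsystem~$i$'s variables, and both use (A1)--(A2) to handle $w^i_t$. The only cosmetic differences are that the paper first strips the irrelevant conditioning variables from $\EXP[x^i_{t+1}\mid H^\com_{t+1}]$ and then substitutes the dynamics (whereas you substitute first and simplify term by term), and that your induction framing is unnecessary since the step at time $t{+}1$ uses only the \emph{definitions} of $\hat x^i_t$, $\hat u^i_t$, not any inductive hypothesis; also note that the conditional independence you invoke is \cite[Claim~2]{Asghari2018} (Lemma~\ref{lem:indep} here), not \cite[Lemma~1]{Asghari2018}.
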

A proof is presented in Sec.~\ref{sec:estimates}.
\begin{remark} \label{rem:info}
  Lemma~\ref{lem:estimates} along with the definition of the state and control
  estimates~\eqref{eq:x} and~\eqref{eq:u} and the information
  structure~\eqref{eq:info-structure} imply that all controllers know the
  value of $\VVEC(\hat x^1_t, \hat x^2_t)$ at time~$t$. An immediate consequence of this is that
  controller~$C^i$ knows the value of $\tilde x^i_t$ at time~$t$. The main result of the model,
	explained in the next section, is that the optimal control action at controller 
	$C^i$ is linear in $(\hat x_t, \tilde x^i_t)$.
\end{remark}
\subsection{Structure of optimal control laws}
In order to present the main result of~\cite{Asghari2018}, we recursively
define matrices $\{P_t\}_{t=1}^T$ as follows: $P_T = Q_T$ and for $t \in
\{T-1, \dots, 1\}$,
\begin{equation} \label{eq:P}
  P_t = \Riccati(P_{t+1}, A, B, Q_t, M_t, R_t).
\end{equation}
Furthermore, let $P^{ii}_t$ denote the $(i,i)$-th block of $P_t$. Then
for $i \in \{1, 2\}$, recursively define the matrices $\{\Pi^{i}_t\}_{t =
1}^T$ and $\{\tilde P^{i}_t\}_{t =
1}^T$ as follows: $\Pi^{i}_T = Q^{ii}_T$ and $\tilde P^{i}_T = Q^{ii}_T$ and for $t \in \{T-1, \dots,
1\}$, let
\begin{equation} \label{eq:tilde-P}
  \tilde P^{i}_t = \Riccati(\Pi^i_{t+1}, A^{ii}, B^{ii}, Q^{ii}_t, M^{ii}_t, R^{ii}_t).
\end{equation}
and
\begin{equation} \label{eq:Pi}
  \Pi^i_{t+1} = (1-p^i) P^{ii}_{t+1} + p^i \tilde P^i_{t+1}.
\end{equation}

The main result of~\cite{Asghari2018} is the following.
\begin{theorem}\label{thm:optimal}
  The optimal control strategy for Problem~\ref{prob:main} is
  given by
  \begin{align}
    \MATRIX{ u^0_t \\ \hat u^1_t \\ \hat u^2_t} &= -K_t \hat x_t
    \shortintertext{and}
    \tilde u^i_t &= - \tilde K^i_t \tilde x^i_t, \quad
    i \in \{1, 2\},
  \end{align}
  where the time evolution of $\hat x_t$ and $\tilde x_t$ are given by
  Lemma~\ref{lem:estimates}.

  The gains $\{K_t\}_{t=0}^{T-1}$ and $\{\tilde
  K_t\}_{t=0}^{T-1}$ are given by
  \begin{align*}
    K_t &= \Gain(P_{t+1}, A, B, M_t, R_t), \\
    \tilde K^i_t &= \Gain(\Pi^i_{t+1}, A^{ii}, B^{ii}, M^{ii}_t, R^{ii}_t), \quad i \in \{1, 2\},
  \end{align*}
where the matrices $\{P_t\}_{t=1}^{T}$, $\{\Pi^i_t\}_{t=1}^{T}$, and $\{\tilde
P^{i}\}_{t=1}^{T}$ are given by~\eqref{eq:P},~\eqref{eq:tilde-P},
and~\eqref{eq:Pi}.
\end{theorem}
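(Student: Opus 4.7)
The plan is to use the orthogonality principle together with the conditional independence of the local states given the common information to split the total expected cost into a centralized LQR problem for $\hat x_t$ and two decoupled single-subsystem LQR problems for $\tilde x^1_t, \tilde x^2_t$. Each piece is then solved by standard completion of squares, giving the stated optimal gains without invoking a value function or a full dynamic program.

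First I would decompose $(x_t, u_t) = (\hat x_t, \hat u_t) + (\tilde x_t, \tilde u_t)$. Since $u^0_t$ is $H^\com_t$-measurable, $\tilde u^0_t = 0$. Because $(\hat x_t, \hat u_t)$ is $H^\com_t$-measurable while $\EXP[\tilde x_t \mid H^\com_t] = \EXP[\tilde u_t \mid H^\com_t] = 0$, conditioning on $H^\com_t$ kills the cross terms in the quadratic per-step cost and yields $\EXP[c_t(x_t,u_t)] = \EXP[c_t(\hat x_t, \hat u_t)] + \EXP[c_t(\tilde x_t, \tilde u_t)]$. Restricting without loss of optimality to strategies of the form $u^i_t = g^i_t(x^i_t, H^\com_t)$ (as permitted by~\cite[Lemma 1]{Asghari2018}), the independence of primitive variables across subsystems then yields $(\tilde x^1_t, \tilde u^1_t) \independent (\tilde x^2_t, \tilde u^2_t) \mid H^\com_t$. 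This, together with the zero conditional means and $\tilde u^0_t = 0$, annihilates the cross-subsystem terms in $\EXP[c_t(\tilde x_t, \tilde u_t)]$ and gives $\EXP[c_t(\tilde x_t, \tilde u_t)] = \sum_i \EXP[\tilde c^i_t(\tilde x^i_t, \tilde u^i_t)]$ with $\tilde c^i_t$ the natural $(Q^{ii}_t, M^{ii}_t, R^{ii}_t)$-quadratic form.

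With the per-step cost decomposed, Lemma~\ref{lem:estimates} supplies the decoupled recursions: $\hat x_{t+1}$ is driven, in expectation given $H^\com_t$, by $(u^0_t, \hat u^1_t, \hat u^2_t)$; and each $\tilde x^i_{t+1}$ follows Bernoulli-switching dynamics driven by $\tilde u^i_t$ and $w^i_t$. Summing the per-step cost over $t$, the total cost splits as $J(\mathbf g) = J_\com(\{u^0_t, \hat u^1_t, \hat u^2_t\}_t) + \sum_i J_i(\{\tilde u^i_t\}_t)$, where $J_\com$ is a standard centralized LQR in $\hat x_t$ and each $J_i$ is a single-subsystem LQR in $\tilde x^i_t$. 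Since any strategy $g^i_t(x^i_t, H^\com_t)$ is equivalent to an independent pair consisting of a common-information function (giving $\hat u^i_t$) and a zero-conditional-mean residual (giving $\tilde u^i_t$), each piece may be optimized separately. Standard completion of squares then yields $K_t = \Gain(P_{t+1}, A, B, M_t, R_t)$ and $\tilde K^i_t = \Gain(\Pi^i_{t+1}, A^{ii}, B^{ii}, M^{ii}_t, R^{ii}_t)$ with the Riccati recursions~\eqref{eq:P}--\eqref{eq:Pi}.

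The main obstacle I expect is the bookkeeping in the preceding step: when $\Gamma^i_{t+1} = 1$ we have $\hat x^i_{t+1} = x^i_{t+1}$, which depends on $\tilde u^i_t$, so a priori the ``common'' cost $J_\com$ can depend on the tilde strategies. A case-by-case averaging over $\Gamma^i_{t+1}$ shows that the tilde contributions to $\EXP[(\hat x^i_{t+1})^\TRANS P^{ii}_{t+1} \hat x^i_{t+1}]$ (with weight $1 - p^i$) combine with the native tilde contributions to $\EXP[(\tilde x^i_{t+1})^\TRANS \tilde P^i_{t+1} \tilde x^i_{t+1}]$ (with weight $p^i$) to produce exactly the effective local cost-to-go matrix $\Pi^i_{t+1} = (1 - p^i) P^{ii}_{t+1} + p^i \tilde P^i_{t+1}$ of~\eqref{eq:Pi}. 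This precise weighted combination is what makes the common and tilde sub-problems genuinely decouple and is the source of the recursion~\eqref{eq:tilde-P}.
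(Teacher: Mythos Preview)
Your proposal is correct and follows essentially the same route as the paper: the orthogonality and conditional-independence split of the per-step cost is the paper's Lemma~\ref{lem:cost} (via properties (H1)--(H4)), the $\Gamma^i_{t+1}$ averaging you flag as the main obstacle and resolve to produce $\Pi^i_{t+1}$ is exactly Lemma~\ref{lem:rewrite}, and the completion of squares is then carried out via Lemma~\ref{lem:square-step} and Theorem~\ref{thm:cost-to-go}. The only cosmetic difference is that the paper never attempts a clean $J = J_\com + \sum_i J_i$ split but instead carries the hat and tilde quadratic forms together through a single backward induction, which sidesteps the very coupling you correctly identify and repair in your final paragraph.
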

\begin{remark} \label{rem:gains}
    Let $K_t = \ROWS(K^0_t, K^1_t, K^2_t)$.
    Then, Theorem~\ref{thm:optimal}
    implies that the optimal control actions are given by
   \begin{align}
	   u^0_t &= - K^0_t \hat x_t , \label{eq:equ-common} \\
	   u^i_t  &= - K^i_t \hat x_t + \tilde K^i_t (x^i_t - \hat x^i_t), 
        \quad i \in \{1, 2\}. \label{eq:equ-local}
    \end{align}
Such a control law is feasible because, $\tilde x^i_t$ is available at controller 
	$C^i$ as explained in Remark~\ref{rem:info}.

	The structure of the control laws~\eqref{eq:equ-common}-\eqref{eq:equ-local}
	implies that the optimal action is a linear function of the state estimate 
	$\hat x_t$. Note that the evolution of the state estimate, given by 
	Lemma~\ref{lem:estimates}, is a non-linear function of the data available at 
	controller $C^i$, $i \in \{1,2 \}$.
\end{remark}

\begin{remark}
	Note that the result does not depend on the distribution of the noise processes 
	$\{w^i_t\}_{t \geq 0}$, $i \in \{1,2 \}$, as long as the random variables 
	$\{w^1_0,\dots, w^1_{T-1}, w^2_0, \dots, w^2_{T-1} \}$ are independent and 
	have finite second moment. For convenience we have presented the result under the 
	additional assumption that the noise is zero-mean but that assumption can be relaxed 
	using a simple change of variables. 
\end{remark}

\section{Proof of the Main Result}
\subsection{Roadmap of the proof}
Our proof is based on the following fact which is typically referred to as
\emph{completion of squares} in the literature.

\begin{fact}
Given a linear system
$x_{t+1} = A x_t + B u_t + w_t$,
the quadratic cost 
\[
	\sum_{t=0}^{T-1} [x_t^\TRANS Q_t x_t 
	+ u_t^\TRANS R_t u_t] + x_T^\TRANS Q_T
	x_T 
\]
may be rewritten as
\[
	x_0^\TRANS P_0 x_0 + \sum_{t=0}^{T-1} (u_t + L_t x_t)^\TRANS \Delta_t (u_t + L_t x_t) + 
	\sum_{t=0}^{T-1} w_t^\TRANS P_{t+1} w_t,
\]
where $P_T = Q_T$ and for $t \in \{T-1, \dots, 0\}$, 
$P_t = \Riccati(P_{t+1}, A, B, Q_t, \mathbf{0}, R_t)$,
$L_t = \Gain(P_{t+1}, A, B,\mathbf{0}, R_t)$,  and
$\Delta_t = (R_t + B^\TRANS P_{t+1} B)$.
\end{fact}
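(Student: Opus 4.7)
The natural approach is a backward-induction / telescoping argument. My plan is to first establish, for each $t$, the one-step identity
\begin{multline*}
  x_t^\TRANS Q_t x_t + u_t^\TRANS R_t u_t + x_{t+1}^\TRANS P_{t+1} x_{t+1}
  = x_t^\TRANS P_t x_t + (u_t + L_t x_t)^\TRANS \Delta_t (u_t + L_t x_t) \\
  + 2 w_t^\TRANS P_{t+1}(A x_t + B u_t) + w_t^\TRANS P_{t+1} w_t,
\end{multline*}
and then sum it over $t = 0, \dots, T-1$, at which point the quadratic forms $x_s^\TRANS P_s x_s$ telescope.

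To prove the per-step identity, I would substitute the dynamics $x_{t+1} = A x_t + B u_t + w_t$ on the left and expand $x_{t+1}^\TRANS P_{t+1} x_{t+1}$. The $w_t$-linear and $w_t$-quadratic pieces separate out cleanly as the two rightmost terms above, leaving a purely quadratic identity in $(x_t, u_t)$. I would verify this remainder by matching the coefficients of $u^\TRANS u$, $u^\TRANS x$, and $x^\TRANS x$. The $u^\TRANS u$ block reduces to $\Delta_t = R_t + B^\TRANS P_{t+1} B$ directly from the definition of $\Delta_t$; the $u^\TRANS x$ cross block reduces by the definition $L_t = \Gain(P_{t+1}, A, B, \mathbf 0, R_t)$; and the $x^\TRANS x$ block matches exactly the Riccati recursion $P_t = \Riccati(P_{t+1}, A, B, Q_t, \mathbf 0, R_t)$. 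Each reduction is a short algebraic calculation.

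Telescoping is then routine: summing from $t = 0$ to $T-1$, the quadratic forms $x_s^\TRANS P_s x_s$ cancel in pairs for $1 \le s \le T-1$, leaving only the endpoint contributions $-x_0^\TRANS P_0 x_0$ and $x_T^\TRANS P_T x_T$; the latter merges with the terminal cost $x_T^\TRANS Q_T x_T$ via $P_T = Q_T$. Rearranging yields the stated formula, modulo a residual linear-in-$w$ cross term $\sum_{t=0}^{T-1} 2 w_t^\TRANS P_{t+1}(A x_t + B u_t)$ carried along from the per-step identity.

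The main (and in fact only) subtle point is that this residual does not appear on the right-hand side of the Fact as written. The identity is to be understood in expectation: since $w_t$ is zero-mean and, by assumption~\textbf{(A1)} together with the causality of the controls, independent of $(x_{0:t}, u_{0:t})$, one has $\EXP[w_t^\TRANS P_{t+1}(A x_t + B u_t)] = 0$, so the residual vanishes when the Fact is applied inside the expectation defining $J(\mathbf g)$. This is the one place any probabilistic assumption on $\{w_t\}$ enters the argument.
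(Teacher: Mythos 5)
Your argument is correct and is, in substance, the completion-of-squares route the paper itself relies on: the paper states this Fact without proof (pointing to~\cite{Astrom1970}), but the one-step identity you establish is exactly its Lemma~\ref{lem:square} before expectations are taken, and your telescoping step is the backward induction carried out in the proof of Theorem~\ref{thm:cost-to-go}. The point you flag as the only subtle one is indeed the right one to flag: read as a pathwise algebraic identity, the Fact is false without the residual $2\sum_{t=0}^{T-1} w_t^\TRANS P_{t+1}(A x_t + B u_t)$, and the identity only holds after taking expectations under the hypotheses that $w_t$ is zero mean and independent of $(x_t,u_t)$ --- which in this model follows from (A1), (A2), and causality of the control laws. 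The paper sidesteps this by never using the pathwise form: Lemma~\ref{lem:square} builds the expectation and the independence hypothesis into its statement, so the cross term never appears. One piece of bookkeeping to watch when you match the coefficients of $u^\TRANS x$: with the paper's convention $\Gain(P,A,B,\mathbf 0,R) = -(R+B^\TRANS P B)^{-1}(A^\TRANS P B)^\TRANS$, the matching forces $\Delta_t L_t = B^\TRANS P_{t+1} A$, so the completed square is $(u_t + L_t x_t)^\TRANS \Delta_t (u_t + L_t x_t)$ only if $L_t$ is taken to be the \emph{negative} of $\Gain(P_{t+1},A,B,\mathbf 0,R_t)$ (equivalently, $(u_t - L_t x_t)$ with $L_t = \Gain(\cdot)$); this is the convention of Lemma~\ref{lem:square}, where $K$ is defined without the leading minus sign, and the discrepancy originates in the statement of the Fact rather than in your proof. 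Apart from that sign convention, your proposal is complete and, if anything, more precise than the Fact as written.
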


Using this fact, one can prove the structure of optimal strategy for the
centralized control of stochastic linear systems for both complete and partial
state observation. See, for example,~\cite[Chapter 8]{Astrom1970}. However,
the completion of squares argument does not work directly for decentralized
control systems. 

In our proof, we exploit a fundamental 
property of the model, which was established in~\cite[Claim 2]{Asghari2018}
and is formally stated as Lemma~\ref{lem:indep}
below:
\(
  x^1_t \independent x^2_t | H^\com_t.
\)
As a consequence of this conditional
independence, the past realizations 
$(x^i_{0{:}t-1}, u^i_{0{:}t-1})$ are 
irrelevant at controller $C^i$ and may be shed without loss of 
optimality. This follows from Blackwell's 
principle of irrelevant information~\cite{Blackwell1964} as 
generalized to decentralized control systems 
in~\cite{Mahajan2015b}. The simplified structure of the optimal controller was established 
in~\cite[Lemma 1]{Asghari2018} and is formally
stated as Lemma~\ref{lem:structure} below. 

Using these two results and basic properties of conditional expectations, we
prove the structure of the dynamics of the state estimates and the estimation
error (Lemma~\ref{lem:estimates}). This structure was also established in~\cite[Theorem
3]{Asghari2018} as part of the result that establishes the structure of the
optimal controller. However, as we show below, one
only needs the conditional independence property of Lemma~\ref{lem:indep} and
its consequences to establish Lemma~\ref{lem:estimates}. 

As a next step, we use orthogonal projection and the specific form of the
information structure to simplify the per-step cost (Lemma~\ref{lem:cost}). We
combine this simplified form of the cost with the dynamics of the state
estimates and estimation error (established in Lemma~\ref{lem:estimates}) to
prove completion of squares result for the cost (Theorem~\ref{thm:cost-to-go})
tailored to the specific model of the system. 

Subsequently, we follow the standard steps of the ``completion of squares''
argument to establish the structure of the optimal strategy.

\subsection{Conditional independence of local states and its implications}

A key property of the model established in~\cite[Claim 2]{Asghari2018} is the
following.
\begin{lemma} \label{lem:indep}
  For any control strategy profile~$\mathbf g$ of the
  form~\eqref{eq:control},
  \[
    x^1_t \independent x^2_t \mid H^\com_t.
  \]
\end{lemma}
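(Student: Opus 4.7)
The plan is to prove, by induction on $t$, the stronger trajectory-level statement $x^1_{0:t} \independent x^2_{0:t} \mid H^\com_t$, which of course implies the marginal claim in the lemma. The strengthening is forced on us because the local action $u^i_t = g^i_t(H^i_t)$ depends on the entire local history $x^i_{0:t}$, not just the current state, so anything less than trajectory-level independence will not propagate through the dynamics.

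The base case at $t = 0$ is immediate from (A1): $x^1_0$ and $x^2_0$ are independent, and $H^\com_0 = \{z_0, \Gamma_0\}$ either reveals $x^i_0$ exactly (when $\Gamma^i_0 = 1$) or conveys no information beyond $\Gamma^i_0$ (when $\Gamma^i_0 = 0$), which by (A1) is independent of $(x^1_0, x^2_0)$. For the inductive step, the key structural observation is that $u^0_t$ is $H^\com_t$-measurable, and by recursively unrolling $u^i_s = g^i_s(x^i_{0:s}, u^i_{0:s-1}, H^\com_s)$ for $s \leq t$, each local action can be rewritten as $u^i_t = \tilde g^i_t(x^i_{0:t}, H^\com_t)$ with no dependence on the other subsystem's data. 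Together with the dynamics~\eqref{eq:dynamics} and the channel~\eqref{eq:channel}, this expresses $(x^i_{0:t+1}, z^i_{t+1}, \Gamma^i_{t+1})$ as a deterministic function of $(x^i_{0:t}, w^i_t, \Gamma^i_{t+1}, H^\com_t)$.

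To close the induction, I combine the inductive hypothesis with the consequence of (A1) that $(w^1_t, \Gamma^1_{t+1})$, $(w^2_t, \Gamma^2_{t+1})$, and the $\sigma$-algebra generated by $(H^\com_t, x^1_{0:t}, x^2_{0:t})$ are mutually independent. This gives $(x^1_{0:t}, w^1_t, \Gamma^1_{t+1}) \independent (x^2_{0:t}, w^2_t, \Gamma^2_{t+1}) \mid H^\com_t$, and by the deterministic-function representation of the previous paragraph, $(x^1_{0:t+1}, z^1_{t+1}, \Gamma^1_{t+1}) \independent (x^2_{0:t+1}, z^2_{t+1}, \Gamma^2_{t+1}) \mid H^\com_t$. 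Now the standard fact that if $(X_1, Y_1) \independent (X_2, Y_2) \mid Z$ then $X_1 \independent X_2 \mid (Z, Y_1, Y_2)$, applied with $Y_i = (z^i_{t+1}, \Gamma^i_{t+1})$, yields $x^1_{0:t+1} \independent x^2_{0:t+1} \mid H^\com_{t+1}$, since $H^\com_{t+1}$ is generated by $H^\com_t$ together with $(z_{t+1}, \Gamma_{t+1})$ (the remaining addition $u^0_t$ is already $H^\com_t$-measurable).

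The main subtlety is that each local controller can use its full past local history, not just common information. Without the unrolling that pins down $u^i_t$ as a function of $x^i_{0:t}$ and $H^\com_t$ alone, with no ``leakage'' from the other subsystem, one cannot separate the two groups of random variables and the induction breaks down. Once that structural observation is in place, the rest is routine manipulation of conditional independence via deterministic transformations and the standard contraction/decomposition rules.
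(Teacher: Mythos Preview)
Your argument is correct. The induction on the trajectory-level statement $x^1_{0:t}\independent x^2_{0:t}\mid H^\com_t$ is the right strengthening, the unrolling of $u^i_t$ into a function of $(x^i_{0:t},H^\com_t)$ is valid because $H^\com_s\subseteq H^\com_t$ for $s\le t$, and the two conditional-independence manipulations you invoke (closure under measurable functions of the conditioning variable, and the graphoid weak-union step $(X_1,Y_1)\independent(X_2,Y_2)\mid Z \Rightarrow X_1\independent X_2\mid(Z,Y_1,Y_2)$) are standard and correctly applied. The one place worth a line of justification is the passage from the inductive hypothesis together with mutual independence of $(w^1_t,\Gamma^1_{t+1})$, $(w^2_t,\Gamma^2_{t+1})$, and $\sigma(H^\com_t,x^1_{0:t},x^2_{0:t})$ to the joint statement $(x^1_{0:t},w^1_t,\Gamma^1_{t+1})\independent(x^2_t,w^2_t,\Gamma^2_{t+1})\mid H^\com_t$; this is true, but it combines two facts rather than following from either alone, so a short Fubini-style line would tighten it.

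As for comparison with the paper: the paper does \emph{not} prove Lemma~\ref{lem:indep}. It is quoted verbatim from~\cite[Claim~2]{Asghari2018} and used as an input to the rest of the argument. Your inductive proof is the natural way to establish the claim and is almost certainly what the cited reference does; in any case it stands on its own.
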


Furthermore, it is shown in~\cite[Lemma 1]{Asghari2018} that the above
conditional independence implies the following.
\begin{lemma} \label{lem:structure}
  In Problem~\ref{prob:main}, there is no loss of optimality to restrict
  attention to local controllers of the form
  \begin{equation} \label{eq:structure}
    u^i_t = g^i_t(x^i_t, H^\com_t), \quad i \in \{1, 2\}.
  \end{equation}
\end{lemma}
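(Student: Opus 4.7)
The plan is to invoke, by backward induction on $t$, the generalization of Blackwell's principle of irrelevant information to decentralized stochastic control from \cite{Mahajan2015b}. Fix an arbitrary strategy profile $\mathbf g$ of the form \eqref{eq:control}, pick $i \in \{1,2\}$, and let $j$ be the other local index. Freeze $\mathbf g^0$ and $\mathbf g^j$ and look for a best response $\mathbf g^{*,i}$ of $C^i$. The induction hypothesis is that the best response at times $s > t$ has the form $u^{*,i}_s = \tilde g^{*,i}_s(x^i_s, H^\com_s)$; the base case $t = T-1$ is immediate since no later action of $C^i$ remains to be chosen. Once the inductive step is in hand, applying the argument first to replace $\mathbf g^1$ by its best response and then to replace $\mathbf g^2$ (both best responses remain feasible under~\eqref{eq:control}) yields a strategy profile of the claimed form whose cost does not exceed $J(\mathbf g)$.

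The inductive step rests on a strengthened form of Lemma~\ref{lem:indep}: under any profile of the form \eqref{eq:control},
\[
  (x^i_{0:t}, u^i_{0:t-1}) \independent (x^j_{0:t}, u^j_{0:t-1}) \mid H^\com_t.
\]
Only the marginal-at-time-$t$ version of this is stated in Lemma~\ref{lem:indep}, but the joint version is what the forward-induction proof of \cite[Claim~2]{Asghari2018} actually produces, using (A1) together with the fact that each $u^k_s$ is a measurable function of $(x^k_{0:s}, H^\com_s)$. Combined with the independence of future primitive randomness from the past given $H^\com_t$, this yields
\[
  (x^i_{0:t-1}, u^i_{0:t-1}) \independent (x^j_t, w^j_{t:T-1}, \Gamma_{t+1:T-1}) \mid (x^i_t, H^\com_t),
\]
which is the precise independence needed to discard $(x^i_{0:t-1}, u^i_{0:t-1})$ at controller $C^i$.

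With this in hand, the cost-to-go conditional on $H^i_t$ and a candidate $u^i_t$ can be written purely in terms of $u^i_t$, the frozen strategies of $C^0$ and $C^j$, the inductively simplified future strategies of $C^i$, and the conditional joint law of $x^j_t$ together with the future primitive randomness given $H^i_t$. The displayed conditional independence shows that this joint law depends on $H^i_t$ only through $(x^i_t, H^\com_t)$, so the cost-to-go is a measurable function of $(x^i_t, H^\com_t, u^i_t)$ alone; a measurable selection of its minimizer in $u^i_t$, which exists by standard selection arguments since (A3) makes the cost-to-go strictly convex and coercive in $u^i_t$, furnishes the desired $\tilde g^{*,i}_t(x^i_t, H^\com_t)$ and closes the induction. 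The main obstacle is the joint conditional independence: Lemma~\ref{lem:indep} alone is not enough, and one must revisit the forward-induction proof of \cite[Claim~2]{Asghari2018} to extract the joint form. Once that is extracted, the remainder of the argument is a routine best-response computation.
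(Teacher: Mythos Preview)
The paper does not supply its own proof of this lemma; it quotes \cite[Lemma~1]{Asghari2018} and, in the roadmap, remarks that the result follows from the conditional independence of Lemma~\ref{lem:indep} together with Blackwell's principle of irrelevant information as generalized to decentralized control in~\cite{Mahajan2015b}. Your proposal spells out precisely that argument, so the approach is the one the paper points to.

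There is, however, a real gap in your inductive step. When you compute the best response of $C^i$ against an \emph{arbitrary} frozen $\mathbf g^j$, the future actions $u^j_s = g^j_s(H^j_s)$ for $s \ge t$ depend on the entire local history $(x^j_{0:s}, u^j_{0:s-1})$, not just on $x^j_s$. Hence the cost-to-go at time~$t$ is a functional of the conditional law of $(x^j_{0:t}, u^j_{0:t-1})$ together with the future primitives given $H^i_t$, not merely of the law of $x^j_t$. Your displayed conditional independence and the phrase ``the conditional joint law of $x^j_t$ together with the future primitive randomness'' are therefore not sufficient as written. The repair is immediate: the strengthened joint independence you already extracted from \cite[Claim~2]{Asghari2018}, namely $(x^i_{0:t}, u^i_{0:t-1}) \independent (x^j_{0:t}, u^j_{0:t-1}) \mid H^\com_t$, combined with~(A1), yields
\[
  (x^i_{0:t-1}, u^i_{0:t-1}) \independent \bigl(x^j_{0:t}, u^j_{0:t-1},\, w_{t:T-1},\, \Gamma_{t+1:T}\bigr) \,\Big|\, (x^i_t, H^\com_t),
\]
which is exactly the statement needed to conclude that the conditional cost-to-go depends on $H^i_t$ only through $(x^i_t, H^\com_t)$. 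With this correction the rest of your argument (strict convexity from~(A3), measurable selection, and the two-step person-by-person replacement) goes through.
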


An immediate consequence of the above lemma is the following.
\begin{corollary} \label{cor:indep}
  For any control strategy profile~$\mathbf g$ of the
  form~\eqref{eq:structure}, we have the following:
  \begin{enumerate}
    \item $(x^1_t, u^1_t) \independent (x^2_t, u^2_t) \mid H^\com_t$.
    \item $(\tilde x^1_t, \tilde u^1_t) \independent (\tilde x^2_t,
      \tilde u^2_t) \mid H^\com_t$.
  \end{enumerate}
\end{corollary}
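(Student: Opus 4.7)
The plan is to derive both parts from Lemma~\ref{lem:indep} via a standard fact about conditional independence: if $X_1 \independent X_2 \mid Z$ and, for each $i \in \{1, 2\}$, the random variable $Y_i$ is a Borel function of $(X_i, Z)$, then $(X_1, Y_1) \independent (X_2, Y_2) \mid Z$. This follows directly by factoring the joint conditional distribution given $Z$. With this template in hand, both claims reduce to checking that the relevant quantities have the required functional form.

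For part~1, Lemma~\ref{lem:structure} lets me assume without loss of optimality that $u^i_t = g^i_t(x^i_t, H^\com_t)$ with $g^i_t$ Borel measurable. Therefore, conditional on $H^\com_t$, $u^i_t$ is a Borel function of $x^i_t$ alone. Setting $X_i = x^i_t$, $Y_i = u^i_t$, and $Z = H^\com_t$ in the template, and invoking Lemma~\ref{lem:indep}, yields the first claim immediately.

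For part~2, I would first observe that $\hat x^i_t = \EXP[x^i_t \mid H^\com_t]$ and $\hat u^i_t = \EXP[u^i_t \mid H^\com_t]$ are, by construction, $\sigma(H^\com_t)$-measurable. Hence, conditional on $H^\com_t$, the residual $\tilde x^i_t = x^i_t - \hat x^i_t$ is a Borel function of $x^i_t$, while $\tilde u^i_t = u^i_t - \hat u^i_t$ is a Borel function of $u^i_t$; combining the latter with the observation from part~1 that $u^i_t$ is itself a Borel function of $(x^i_t, H^\com_t)$, the pair $(\tilde x^i_t, \tilde u^i_t)$ is a Borel function of $(x^i_t, H^\com_t)$. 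A second application of the template then delivers the second claim.

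There is no genuine obstacle in the argument. The only mildly delicate step is verifying integrability so that the conditional expectations $\hat x^i_t$ and $\hat u^i_t$ are well-defined and $\sigma(H^\com_t)$-measurable; this follows from assumption~(A2) combined with the observation that control actions built recursively from Borel measurable laws applied to finite-variance primitive variables remain square-integrable. Everything else is a routine application of the factorization property above.
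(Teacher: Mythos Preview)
Your argument is correct and matches the paper's own proof in substance: Part~1 comes from Lemma~\ref{lem:indep} plus the structure~\eqref{eq:structure}, and Part~2 from Part~1 together with the $H^\com_t$-measurability of $\hat x^i_t$ and $\hat u^i_t$; you have simply made explicit the preservation-of-conditional-independence template that the paper invokes tacitly. One minor slip worth flagging: Borel functions of finite-variance primitives need not be square-integrable, so your integrability justification is not quite right---but this is tangential, since the paper (like you) simply takes the conditional expectations in~\eqref{eq:u} as well-defined.
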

\begin{proof}
  Property 1 follows from the Lemma~\ref{lem:indep} and the structure of the
  control strategy. Property 2 follows from Property 1, Eqs.~\eqref{eq:x} and
  \eqref{eq:u} and the fact that $\hat x^i_t$ and $\hat u^i_t$ are functions
  of $H^\com_t$.
\end{proof}

\subsection{Some preliminary properties}
\begin{lemma}
  For any control strategy profile~$\mathbf g$ of the
  form~\eqref{eq:control}, we have the following:
  \begin{description}
    \item[\textbf{(H1)}] $\hat u^0_t = u^0_t$ and $\tilde u^0_t = \mathbf{0}$. Thus,
      $\hat u_t = \VVEC(u^0_t, \hat u^1_t, \hat u^2_t)$ and $\tilde u_t =
      \VVEC(\mathbf 0, \tilde u^1_t, \tilde u^2_t)$.
    \item[\textbf{(H2)}] $\EXP[ \tilde x_t \mid H^\com_t] = 0$ and
      $\EXP[\tilde u_t \mid H^\com_t] = 0$.
    \item[\textbf{(H3)}] For any matrix $W$ of appropriate dimensions,
      $\EXP[ \hat s_t^\TRANS W \tilde s_t ] = 0$, where
      $\hat s_t = \VVEC(\hat x_t, \hat u_t)$ and
      $\tilde s_t = \VVEC(\tilde x_t, \tilde u_t)$.
  \end{description}
  Furthermore, if the strategy profile is of the
  form~\eqref{eq:structure}, we have:
  \begin{description}
    \item[\textbf{(H4)}] For any matrix $W$ of appropriate dimensions,
      $\EXP[ (\tilde s^1_t)^\TRANS W \tilde s^2_t]=0$, where $\tilde s^i_t =
        \VVEC(\tilde x^i_t, \tilde u^i_t)$.
  \end{description}
\end{lemma}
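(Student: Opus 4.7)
The plan is to verify the four properties in order, each following from elementary properties of conditional expectation together with Lemma~\ref{lem:indep} and Corollary~\ref{cor:indep}.

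First, for H1 I would observe from the information structure~\eqref{eq:info-structure} that $H^0_t = H^\com_t$, so $u^0_t = g^0_t(H^0_t)$ is already $H^\com_t$-measurable. Hence $\hat u^0_t = u^0_t$ and $\tilde u^0_t = \mathbf{0}$, and the stated block decompositions of $\hat u_t$ and $\tilde u_t$ follow. Property H2 is then immediate from the definitions~\eqref{eq:x}--\eqref{eq:u}: by linearity, $\EXP[\tilde x_t \mid H^\com_t] = \EXP[x_t \mid H^\com_t] - \hat x_t = \mathbf{0}$ and similarly for $\tilde u_t$. In other words, residuals are orthogonal to the conditioning $\sigma$-algebra.

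For H3, my approach is the tower property together with H2. Because $\hat s_t = \VVEC(\hat x_t, \hat u_t)$ is $H^\com_t$-measurable by construction, conditioning on $H^\com_t$ lets me pull $\hat s_t^\TRANS W$ out of the inner expectation; the remaining $\EXP[\tilde s_t \mid H^\com_t]$ vanishes by H2. This is the usual orthogonality-principle argument. For H4, the same tower step reduces matters to showing that $\EXP[(\tilde s^1_t)^\TRANS W \tilde s^2_t \mid H^\com_t] = \mathbf{0}$. Here I would expand the quadratic form entrywise and apply Corollary~\ref{cor:indep}(2): conditional independence of $(\tilde x^1_t, \tilde u^1_t)$ and $(\tilde x^2_t, \tilde u^2_t)$ given $H^\com_t$ lets each scalar cross-term factor as a product of conditional expectations, each of which is zero by H2.

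No step presents a real obstacle; the proof is essentially bookkeeping with the tower property and orthogonality of residuals against the common information. The only subtlety worth flagging is the scope of each claim: H1--H3 hold for arbitrary strategies of the form~\eqref{eq:control}, whereas H4 requires the restricted form~\eqref{eq:structure} so that Corollary~\ref{cor:indep} applies. One should also verify that $\hat x_t$ and $\hat u_t$ are $H^\com_t$-measurable before pulling them out of conditional expectations, but this is a direct consequence of their definitions as conditional expectations given $H^\com_t$.
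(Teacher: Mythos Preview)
Your proposal is correct and matches the paper's own proof essentially line for line: (H1) from $H^0_t = H^\com_t$, (H2) from the definition of conditional expectation, (H3) via the tower property and (H2), and (H4) via Corollary~\ref{cor:indep} together with (H2). The only difference is that you spell out the entrywise factorisation in (H4) a bit more explicitly than the paper does, which is fine.
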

\begin{proof}
  Property (H1) follows from the fact that $u^0_t$ is a measurable function of
  $H^0_t$ (which is the same as $H^\com_t$).

  Property (H2) is a standard
  property of error estimates and can be shown as follows:
  \[
    \EXP[ \tilde x_t \mid H^\com_t] =
    \EXP[ x_t - \EXP[ x_t \mid H^\com_t] \mid H^\com_t] = 0.
  \]

  Property (H3) follows from the generalized orthogonal principle and can be
  shown as follows:
  \[
    \EXP[ \hat s_t^\TRANS W \tilde s_t ] =
    \EXP\bigl[ \EXP[ \hat s_t^\TRANS W \tilde s_t \mid H^\com_t ] \bigr]
    =
    \EXP\bigl[ \hat s_t^\TRANS W
    \underbrace{\EXP[ \tilde s_t \mid H^\com_t ]}_
    {= 0 \text{ (by (H2))}} \bigr].
  \]

  Property (H4) follows from Corollary~\ref{cor:indep} and (H2).
\end{proof}

\subsection{Proof of Lemma~\ref{lem:estimates}}
\label{sec:estimates}

By definition, $H^\com_{t+1} = H^\com_{t} \cup \{ z^1_{t+1}, z^2_{t+1},
\Gamma^1_{t+1}, \Gamma^2_{t+1}, u^0_t \}$. Thus,
\begin{align}
  \hat x^i_{t+1} &= \EXP[ x^i_{t+1} \mid H^\com_{t+1}] \notag \\
  &= \EXP[ x^i_{t+1} \mid H^\com_t, z^1_{t+1}, z^2_{t+1},
  \Gamma^1_{t+1},\Gamma^2_{t+1}, u^0_t ] \notag \\
  &= \EXP[ x^i_{t+1} \mid H^\com_t, z^i_{t+1},
  \Gamma^i_{t+1}]  \label{eq:x-hat-equiv}
\end{align}
where we can remove $z^{-i}_{t+1}$ (where $-i$ means the controller other than~$i$),
$\Gamma^{-i}_{t+1}$, and $u^0_{t}$ due to the following reasons:
\begin{itemize}
  \item By~\eqref{eq:channel}, $z^{-i}_{t+1} = f(x^{-i}_{t+1},
    \Gamma^{-i}_{t+1})$ and hence conditionally independent of $x^i_{t+1}$
		given $H^\com_t$ due to (A1), Lemma~\ref{lem:indep} 
		and Lemma~\ref{lem:structure}.
    \item $\Gamma^{-i}_{t+1}$ is conditionally independent of $x^i_{t+1}$
      given $\{ H^\com_t, u^0_{t} \}$ due to~\eqref{eq:dynamics} and (A1).
    \item $u^0_t = g^0_t(H^0_t)$ and hence may be removed from the
      conditioning (since $H^\com_t = H^0_t$).
\end{itemize}

Now, we consider the two cases $\Gamma^i_{t+1} = 0$ and $\Gamma^i_{t+1} = 1$
separately. When $\Gamma^i_{t+1} = 0$, $z^i_{t+1} = \BLANK$ and
from~\eqref{eq:x-hat-equiv} we have
\begin{align*}
  \hat x^i_{t+1} &= \EXP[ x^i_{t+1} \mid H^\com_t, z^i_{t+1} = \BLANK,
  \Gamma^i_{t+1} = 0 ] \\
  &\stackrel{(a)}= \EXP[ x^i_{t+1} \mid H^\com_t ] \\
  &\stackrel{(b)}= A^{ii} \hat x^i_{t} + B^{i0} u^0_t + B^{ii} \hat u^i_t,
\end{align*}
where $(a)$ follows from (A1) and $(b)$ follows from~\eqref{eq:dynamics},
\eqref{eq:x}, \eqref{eq:u}, (H1), (A1), and (A2). Consequently,
\[
  \tilde x^i_{t+1} = x^i_{t+1} - \hat x^i_{t+1}
  = A^{ii} \tilde x^i_t + B^{ii} \tilde u^i_t + w^i_t.
\]

Now consider the case when $\Gamma^i_{t+1} = 1$, i.e., $z^i_{t+1} =
x^i_{t+1}$. Therefore,
\[
  \hat x^i_{t+1} = \EXP[ x^i_{t+1} \mid H^\com_{t}, z^i_{t+1} = x^i_{t+1},
  \Gamma^i_{t+1} = 1 ] = x^i_{t+1}.
\]
Consequently, $\tilde x^i_{t+1} = x^i_{t+1} - \hat x^i_{t+1} = 0$.

\subsection{Orthogonal projection for per-step cost}
\begin{lemma} \label{lem:cost}
  For any strategy profile of the form~\eqref{eq:structure}, we have
  \begin{align}
    \EXP[ x_t^\TRANS Q_t x_t ] &= \EXP\Bigl[ \hat x_t^\TRANS Q_t \hat x_t
      + \sum_{i \in \{1, 2\}} (\tilde x^{i}_t)^\TRANS Q_t^{ii} \tilde x^i_t
    \Bigr], \label{eq:x-cost}
    \\
    \EXP[ u_t^\TRANS R_t u_t ] &= \EXP\Bigl[ \hat u_t^\TRANS R_t \hat u_t
      + \sum_{i \in \{1, 2\}} (\tilde u^{i}_t)^\TRANS R_t^{ii} \tilde u^i_t
    \Bigr], \label{eq:u-cost}
    \\
    \EXP[ x_t^\TRANS M_t u_t ] &= \EXP\Bigl[ \hat x_t^\TRANS M_t \hat u_t
      + \sum_{i \in \{1, 2\}} (\tilde x^{i}_t)^\TRANS M_t^{ii} \tilde u^i_t
    \Bigr]. \label{eq:xu-cost}
  \end{align}
  Thus, we have that
  \begin{multline*}
    \EXP\Biggl[
      \MATRIX{ x_t \\ u_t }^\TRANS
      \MATRIX{ Q_t & M_t \\ M_t^\TRANS & R_t}
      \MATRIX{ x_t \\ u_t } \Biggr] =
    \EXP\Biggl[
      \MATRIX{ \hat x_t \\ \hat u_t }^\TRANS
      \MATRIX{ Q_t & M_t \\ M_t^\TRANS & R_t}
      \MATRIX{ \hat x_t \\ \hat u_t } \Biggr] \\
    + \sum_{i \in \{1, 2\}}
    \EXP\Biggl[
      \MATRIX{ \tilde x^i_t \\ \tilde u^i_t }^\TRANS
      \MATRIX{ Q^{ii}_t & M^{ii}_t \\ (M^{ii}_t)^\TRANS & R^{ii}_t}
      \MATRIX{ \tilde x^i_t \\ \tilde u^i_t } \Biggr]
  \end{multline*}
\end{lemma}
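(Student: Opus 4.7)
The plan is to verify the three scalar identities \eqref{eq:x-cost}--\eqref{eq:xu-cost} in turn and then sum them (with coefficients $1$, $1$, and $2$, matching the expansion of the central symmetric matrix $\SMATRIX{Q_t & M_t \\ M_t^\TRANS & R_t}$) to obtain the combined identity. All three follow a common two-stage template: first remove ``hat''--``tilde'' cross terms using (H3), then reduce the remaining ``tilde''--``tilde'' terms to per-subsystem contributions using (H1) and (H4).

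For the first stage, I would decompose $x_t = \hat x_t + \tilde x_t$ and $u_t = \hat u_t + \tilde u_t$, and expand each of the three quadratic forms into four pieces of types (hat,hat), (hat,tilde), (tilde,hat), (tilde,tilde). The two cross pieces vanish in expectation by (H3), applied with $W$ equal to $Q_t$, $R_t$, or $M_t$ (and with the hat/tilde factors appearing in either order; (H3) covers both since $\EXP[\tilde s_t^\TRANS W \hat s_t] = \EXP[\hat s_t^\TRANS W^\TRANS \tilde s_t]=0$). What survives is the (hat,hat) term, which already matches the first term on the corresponding right-hand side, plus a (tilde,tilde) term still to be processed.

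For the second stage, I would expand each (tilde,tilde) term using the block partitions introduced just before Problem~\ref{prob:main}. For \eqref{eq:x-cost}, writing $\tilde x_t^\TRANS Q_t \tilde x_t = \sum_{i,j \in \{1,2\}} (\tilde x^i_t)^\TRANS Q^{ij}_t \tilde x^j_t$ and invoking (H4) with $W = Q^{12}_t$ (using that $\tilde x^i_t$ is the first sub-block of $\tilde s^i_t$) kills the off-diagonal $i \neq j$ summands, leaving the claimed diagonal sum. The argument for \eqref{eq:u-cost} and \eqref{eq:xu-cost} is analogous, except that the $u$-partition contains the extra index $0$; here (H1), which asserts $\tilde u^0_t = 0$, first eliminates every block touching a ``$0$'' index (the $R^{0\cdot}_t$, $R^{\cdot 0}_t$, and $M^{\cdot 0}_t$ blocks), after which (H4) removes the remaining $i \neq j$ blocks with $i, j \in \{1,2\}$, so that only the diagonal $R^{ii}_t$ and $M^{ii}_t$ terms survive.

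The main obstacle is really just bookkeeping: the $u$-partition into $\{0,1,2\}$ does not line up with the $x$-partition into $\{1,2\}$, so one has to peel off the ``$0$'' components via (H1) before (H4) can be applied symmetrically across subsystems. Once the three scalar identities are in place, the combined display follows immediately by linearity of expectation.
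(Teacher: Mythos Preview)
Your proposal is correct and follows essentially the same approach as the paper: decompose into hat and tilde parts, kill the cross terms via (H3), then reduce the tilde--tilde quadratic to diagonal blocks via (H4). You are actually more explicit than the paper, which proves only~\eqref{eq:x-cost} in detail and dismisses~\eqref{eq:u-cost} and~\eqref{eq:xu-cost} with ``in a similar manner''; your observation that (H1) is needed first to strip out the $\tilde u^0_t$ component before (H4) applies is exactly the missing bookkeeping the paper leaves implicit.
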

\begin{proof}
  To show~\eqref{eq:x-cost}, we recall that $x_t = \hat x_t + \tilde x_t$.
  Thus,
  \begin{equation} \label{eq:x-split}
    \EXP[ x_t^\TRANS Q_t x_t ] = \EXP[ \hat x_t^\TRANS Q_t \hat x_t
    + \tilde x_t^\TRANS Q_t \tilde x_t + 2 \hat x_t^\TRANS Q_t \tilde x_t].
  \end{equation}

  Consider the second term of~\eqref{eq:x-split}
  \begin{equation} \label{eq:x-tilde}
    \EXP[ \tilde x_t^\TRANS Q_t \tilde x_t ] =
    \sum_{i \in \{1, 2\}} \EXP[ (\tilde x^i_t)^\TRANS Q_t^{ii} \tilde x^i_t ]
    +
    2
    \underbrace{\EXP[ (\tilde x^1_t)^\TRANS Q_t^{12} \tilde x^2_t ]}
    _{ = 0 \text{ (by (H4))}}.
  \end{equation}
  Substituting~\eqref{eq:x-tilde} in~\eqref{eq:x-split} and observing that the
  third term of~\eqref{eq:x-split} is $0$ due to (H3), we
  get~\eqref{eq:x-cost}.

  Eqs.~\eqref{eq:u-cost} and~\eqref{eq:xu-cost} can be proved in a similar
  manner.
\end{proof}

\subsection{A change of variables}
For ease of notation, we define
\begin{align}
  \hat x^{i,\off}_{t+1} &= A^{ii} \hat x^i_t + B^{i0} u^0_t + B^{ii} \hat
  u^i_t,
  \\
  \tilde x^{i,\off}_{t+1} &= A^{ii} \tilde x^i_t +  B^{ii} \tilde
  u^i_t + w^i_t. \label{eq:x-tilde-off}
\end{align}
Thus, we can write
\[
    \hat x^i_{t+1} =
    \begin{cases}
      \hat x^{i,\off}_{t+1}, & \text{if 
      $\Gamma^i_{t+1} = 0$} \\
      x^i_{t+1}, & \text{if 
      $\Gamma^i_{t+1} = 1,$}
    \end{cases}
  \]
and 
\[
    \tilde x^i_{t+1} =
    \begin{cases}
      \tilde x^{i, \off}_{t+1}, & \text{if 
      $\Gamma^i_{t+1} = 0$} \\
      0, & \text{if 
      $\Gamma^i_{t+1} = 1.$}
    \end{cases}
  \]
Let $\hat x^\off_t = \VVEC(\hat x^{1,\off}_t, \hat x^{2,\off}_t)$ and $\tilde x^\off_t =
\VVEC(\tilde x^{1, \off}_t, \tilde x^{2, \off}_t)$.
It follows that $\hat x^{i,\off}_{t+1} + \tilde x^{i,\off}_{t+1} = x^i_{t+1}$
and
\begin{equation} \label{eq:x-hat-off}
  \hat x^\off_{t+1} = A \hat x_t + B \hat u_t.
\end{equation}

\begin{lemma}
  For any strategy profile of the form~\eqref{eq:control}, we have the
  following:
  \begin{description}
    \item[\textbf{(H5)}] For any matrix $W$ of appropriate dimensions,
      $\EXP[\hat x^\off_t W \tilde x^\off_t ] = 0$.
  \end{description}
  Furthermore, if the strategy profile is of the
  form~\eqref{eq:structure}, we have:
  \begin{description}
    \item[\textbf{(H6)}] For any matrix $W$ of appropriate dimensions,
      $\EXP[(\tilde x^{1,\off}_t)^\TRANS W \tilde x^{2,\off}_t ] = 0$.
  \end{description}
\end{lemma}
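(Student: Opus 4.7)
The plan is to prove both (H5) and (H6) by the same two-step recipe already used for (H3) and (H4): identify the correct $\sigma$-algebra on which to condition, then use tower/independence. The new twist relative to (H3)--(H4) is that $\hat x^\off_{t+1}$ and $\tilde x^\off_{t+1}$ are built from data at time $t$ (not $t+1$), so conditioning on $H^\com_t$---rather than $H^\com_{t+1}$---is the natural move.

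For (H5), I would first observe that $\hat x^\off_{t+1}$, as defined by $\hat x^{i,\off}_{t+1} = A^{ii}\hat x^i_t + B^{i0} u^0_t + B^{ii}\hat u^i_t$, is measurable with respect to $H^\com_t$: $\hat x^i_t$ and $\hat u^i_t$ are by construction conditional expectations given $H^\com_t$, and $u^0_t = g^0_t(H^0_t) = g^0_t(H^\com_t)$. Next, I would verify that $\EXP[\tilde x^{i,\off}_{t+1} \mid H^\com_t] = 0$ by expanding~\eqref{eq:x-tilde-off} and using (H2) on $\tilde x^i_t$ and $\tilde u^i_t$, together with (A1)--(A2) to get $\EXP[w^i_t \mid H^\com_t] = \EXP[w^i_t] = 0$ (since $H^\com_t$ is a function of primitives $\{x^i_0, w^i_{0:t-1}, \Gamma^i_{0:t}\}$, which are independent of $w^i_t$). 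Then (H5) at index $t+1$ follows by the tower property,
\[
  \EXP[(\hat x^\off_{t+1})^\TRANS W \tilde x^\off_{t+1}]
  = \EXP\bigl[(\hat x^\off_{t+1})^\TRANS W \,\EXP[\tilde x^\off_{t+1} \mid H^\com_t]\bigr] = 0,
\]
exactly mirroring the derivation of (H3).

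For (H6), I would upgrade this to a conditional independence argument under the structural restriction~\eqref{eq:structure}. Corollary~\ref{cor:indep}(2) gives $(\tilde x^1_t, \tilde u^1_t) \independent (\tilde x^2_t, \tilde u^2_t) \mid H^\com_t$; by (A1) the noises $w^1_t$ and $w^2_t$ are mutually independent and independent of everything generated by primitives up through time $t-1$ together with $\Gamma_t$, which contains $H^\com_t \cup \{\tilde x^1_t, \tilde u^1_t, \tilde x^2_t, \tilde u^2_t\}$ (noting $\tilde u^i_t$ is a function of $(x^i_t, H^\com_t)$ under~\eqref{eq:structure}). Hence $(\tilde x^1_t, \tilde u^1_t, w^1_t) \independent (\tilde x^2_t, \tilde u^2_t, w^2_t) \mid H^\com_t$, and therefore $\tilde x^{1,\off}_{t+1} \independent \tilde x^{2,\off}_{t+1} \mid H^\com_t$. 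Combined with $\EXP[\tilde x^{i,\off}_{t+1} \mid H^\com_t] = 0$ from the previous paragraph, this yields
\[
  \EXP[(\tilde x^{1,\off}_{t+1})^\TRANS W \tilde x^{2,\off}_{t+1} \mid H^\com_t]
  = \EXP[\tilde x^{1,\off}_{t+1}\mid H^\com_t]^\TRANS W \,\EXP[\tilde x^{2,\off}_{t+1}\mid H^\com_t] = 0,
\]
and (H6) follows after taking an outer expectation.

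There is no real obstacle here beyond careful bookkeeping; the only subtlety worth flagging in the write-up is the measurability check that $u^0_t \in \sigma(H^\com_t)$ (so that $\hat x^\off_{t+1}$ genuinely lies in this $\sigma$-algebra despite the one-step time shift), and the need to invoke~\eqref{eq:structure} rather than~\eqref{eq:control} in (H6) so that Corollary~\ref{cor:indep}(2) is actually available.
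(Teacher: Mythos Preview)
Your proof is correct and is essentially the same argument as the paper's, just spelled out at a finer level of detail. The paper proves (H5) and (H6) by observing that $\hat x^\off_{t+1}$ is a linear function of $\hat s_t = \VVEC(\hat x_t,\hat u_t)$ while $\tilde x^{i,\off}_{t+1}$ is a linear function of $\tilde s^i_t = \VVEC(\tilde x^i_t,\tilde u^i_t)$ plus the fresh noise $w^i_t$, and then invokes (H3), (H4) and (A1) directly on the resulting bilinear terms; you instead re-run the tower/conditional-independence argument on $H^\com_t$ from scratch, which is exactly how (H3) and (H4) were obtained in the first place.
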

\begin{proof}
  Property (H5) follows immediately from (H3). Property (H6) follows
  immediately from (H4) and (A1).
\end{proof}

\begin{lemma}  \label{lem:rewrite}
  For any strategy profile of the form~\eqref{eq:structure}, we have the following:
  \begin{align}
    \hskip 1em & \hskip -1em
    \EXP\Bigl[ \hat x_{t+1}^\TRANS P_{t+1} \hat x_{t+1} +
      \smashoperator{\sum_{i \in \{1, 2\}}}
      (\tilde x^i_{t+1})^\TRANS \tilde P^i_{t+1}
      \tilde x^i_{t+1}
    \Bigr] \notag \\
    &= \EXP\Bigl[ (\hat x^\off_{t+1})^\TRANS P_{t+1} \hat x^\off_{t+1} +
          \smashoperator{\sum_{i \in \{1, 2\}}}
          (\tilde x^{i,\off}_{t+1})^\TRANS
          \Pi^i_{t+1} \tilde x^{i,\off}_{t+1}
        \Bigr] . \label{eq:update}
  \end{align}
\end{lemma}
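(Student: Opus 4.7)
The plan is to expand both sides of~\eqref{eq:update} term-by-term and use three ingredients: (i)~the compact form of Lemma~\ref{lem:estimates}, namely
\[
  \hat x^i_{t+1} = \hat x^{i,\off}_{t+1} + \mathds{1}[\Gamma^i_{t+1}=1]\, \tilde x^{i,\off}_{t+1},
  \qquad
  \tilde x^i_{t+1} = \mathds{1}[\Gamma^i_{t+1}=0]\, \tilde x^{i,\off}_{t+1};
\]
(ii)~the independence of each $\Gamma^i_{t+1}$ from the ``off'' quantities (since by (A1) and~\eqref{eq:x-hat-off}--\eqref{eq:x-tilde-off} those quantities are measurable functions of $H^\com_t$, $x^i_t$, $u^i_t$, and $w^i_t$, all independent of $\Gamma_{t+1}$) and from each other; and (iii)~the orthogonality properties (H5) and (H6).

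First I would write $\EXP[\hat x_{t+1}^\TRANS P_{t+1} \hat x_{t+1}]$ as the sum of the four block terms $\EXP[(\hat x^i_{t+1})^\TRANS P^{ij}_{t+1} \hat x^j_{t+1}]$ and substitute the identity above. Cross terms pairing a $\hat{\cdot}^\off$ with a $\tilde{\cdot}^\off$ vanish after pulling the independent indicator $\mathds{1}[\Gamma^\cdot_{t+1}=1]$ out of the expectation, thanks to (H5). The pure $\tilde{\cdot}^\off$ cross term with $i \ne j$ vanishes similarly, using the joint independence of $\Gamma^1_{t+1}, \Gamma^2_{t+1}$ together with (H6). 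The pure $\tilde{\cdot}^\off$ term with $i = j$ reduces to $(1-p^i)\,\EXP[(\tilde x^{i,\off}_{t+1})^\TRANS P^{ii}_{t+1} \tilde x^{i,\off}_{t+1}]$. The remaining pieces assemble (by reversing the block decomposition) into $\EXP[(\hat x^\off_{t+1})^\TRANS P_{t+1} \hat x^\off_{t+1}]$.

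Next, for the $\tilde P^i$ terms on the LHS, substituting the identity for $\tilde x^i_{t+1}$ and pulling out the independent indicator gives $\EXP[(\tilde x^i_{t+1})^\TRANS \tilde P^i_{t+1} \tilde x^i_{t+1}] = p^i\,\EXP[(\tilde x^{i,\off}_{t+1})^\TRANS \tilde P^i_{t+1} \tilde x^{i,\off}_{t+1}]$. Adding this to the $(1-p^i)$ contributions produced in the previous step and invoking the defining recursion $\Pi^i_{t+1} = (1-p^i)P^{ii}_{t+1} + p^i \tilde P^i_{t+1}$ from~\eqref{eq:Pi} delivers the RHS of~\eqref{eq:update} exactly.

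The main obstacle is purely bookkeeping: one must enumerate every cross term arising from the block expansion of $\hat x_{t+1}^\TRANS P_{t+1} \hat x_{t+1}$ and identify the correct orthogonality fact that kills it---(H5) for $\hat{\cdot}^\off$-$\tilde{\cdot}^\off$ pairings and (H6) for $i \ne j$ pairings of $\tilde{\cdot}^\off$ quantities---while tracking the factors $p^i$ and $1-p^i$ produced by integrating out the indicators. No new conceptual step is required beyond these properties and channel independence.
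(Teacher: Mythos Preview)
Your argument is correct and relies on exactly the same three ingredients the paper uses---(H5), (H6), and the independence of $\Gamma_{t+1}$ from the ``off'' quantities---so the two proofs are substantively the same. The only difference is organizational: the paper conditions on each of the four realizations of $(\Gamma^1_{t+1},\Gamma^2_{t+1})$, evaluates the left-hand side in every case, and then averages via the law of total probability, whereas you encode the case split through the indicators $\mathds 1[\Gamma^i_{t+1}=\cdot]$ and integrate them out directly after factoring them through the expectation. Your packaging is a bit more compact and scales to general $N$ without enumerating $2^N$ cases; the paper in fact adopts essentially this viewpoint in the discussion section (Lemma~\ref{lem:generalize}) when sketching the extension beyond $N=2$.
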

\begin{proof}
  We compute the conditional value of the left hand side given the realization
  of $\Gamma_{t+1} = (\Gamma^1_{t+1}, \Gamma^2_{t+1})$ and using
  Lemma~\ref{lem:estimates}. We have four cases
  \begin{enumerate}
    \item $\Gamma_{t+1} = (0, 0)$: In this case $\hat x_{t+1} = \hat x^\off_{t+1}$
      and $\tilde x_{t+1} = \tilde x^\off_{t+1}$. Thus,
      \begin{align*}
        &\EXP\Bigl[ \hat x_{t+1}^\TRANS P_{t+1} \hat x_{t+1} +
          \smashoperator{\sum_{i \in \{1, 2\}}}
          (\tilde x^i_{t+1})^\TRANS \tilde P^i_{t+1}
          \tilde x^i_{t+1}
          \Big| \Gamma_{t+1} = (0,0)
        \Bigr] \\
        &\quad=
        \EXP\Bigl[ (\hat x^\off_{t+1})^\TRANS P_{t+1} \hat x^\off_{t+1} +
          \smashoperator{\sum_{i \in \{1, 2\}}}
          (\tilde x^{i,\off}_{t+1})^\TRANS \tilde
          P^i_{t+1} \tilde x^{i,\off}_{t+1}
        \Bigr].
      \end{align*}
    \item $\Gamma_{t+1} = (1, 0)$: In this case $\hat x_{t+1} = \VVEC(x^1_{t+1},
      \hat x^{2,\off}_{t+1}) = \hat x^\off_{t+1} + \VVEC(\tilde x^{1, \off}_{t+1},
      \mathbf 0)$ and $\tilde x_{t+1} = \VVEC(\mathbf 0, \tilde
      x^{2,\off}_{t+1})$. Thus,
      \begin{align*}
        &\EXP\Bigl[ \hat x_{t+1}^\TRANS P_{t+1} \hat x_{t+1} +
          \smashoperator{\sum_{i \in \{1, 2\}}}
          (\tilde x^i_{t+1})^\TRANS \tilde P^i_{t+1}
          \tilde x^i_{t+1}
          \Big| \Gamma_{t+1} = (1,0)
        \Bigr] \\
        &\quad =
        \EXP\Bigl[ (\hat x^\off_{t+1})^\TRANS P_{t+1} \hat x^\off_{t+1} +
          (\tilde x^{1,\off}_{t+1})^\TRANS  P^{11}_{t+1}
          \tilde x^{1,\off}_{t+1} \\
        &\hskip 10.7em +
          (\tilde x^{2,\off}_{t+1})^\TRANS  \tilde P^{2}_{t+1}
          \tilde x^{2,\off}_{t+1}
        \Bigr].
      \end{align*}
    \item $\Gamma_{t+1} = (0,1)$: Similar to case 2), we can show that
      \begin{align*}
        &\EXP\Bigl[ \hat x_{t+1}^\TRANS P_{t+1} \hat x_{t+1} +
          \smashoperator{\sum_{i \in \{1, 2\}}}
          (\tilde x^i_{t+1})^\TRANS \tilde P^i_{t+1}
          \tilde x^i_{t+1}
          \Big| \Gamma_{t+1} = (0,1)
        \Bigr] \\
        &\quad =
        \EXP\Bigl[ (\hat x^\off_{t+1})^\TRANS P_{t+1} \hat x^\off_{t+1} +
          (\tilde x^{1,\off}_{t+1})^\TRANS  \tilde P^{1}_{t+1}
          \tilde x^{1,\off}_{t+1} \\
        &\hskip10.7em +
          (\tilde x^{2,\off}_{t+1})^\TRANS  P^{22}_{t+1}
          \tilde x^{2,\off}_{t+1}
        \Bigr].
      \end{align*}
    \item $\Gamma_{t+1} = (1,1)$: In this case, $\hat x_{t+1} = x_{t+1} = \hat
      x^{\off}_{t+1} + \tilde x^\off_{t+1}$ and $\tilde x_{t+1} = \mathbf 0$.
      Thus,
      \begin{align*}
        &\EXP\Bigl[ \hat x_{t+1}^\TRANS P_{t+1} \hat x_{t+1} +
          \smashoperator{\sum_{i \in \{1, 2\}}}
          (\tilde x^i_{t+1})^\TRANS \tilde P^i_{t+1}
          \tilde x^i_{t+1}
          \Big| \Gamma_{t+1} = (1,1)
        \Bigr] \\
        &= \EXP\Bigl[ (\hat x^\off_{t+1})^\TRANS P_{t+1} \hat x^\off_{t+1}
          + (\tilde x^\off_{t+1})^\TRANS P_{t+1} \tilde x^\off_{t+1} \notag\\
        & \hskip 9.7em
          + 2 (\hat x^\off_{t+1})^\TRANS P_{t+1} \tilde x^\off_{t+1}
        \Bigr] \\
        &=
        \EXP\Bigl[ (\hat x^\off_{t+1})^\TRANS P_{t+1} \hat x^\off_{t+1} +
          \smashoperator{\sum_{i \in \{1, 2\}}}
          (\tilde x^{i,\off}_{t+1})^\TRANS
          P^{ii}_{t+1} \tilde x^{i,\off}_{t+1}
        \Bigr],
      \end{align*}
      where the last equality follows from (H5) and (H6).
  \end{enumerate}

  Combining these four cases and using the law of total probability, we
  get~\eqref{eq:update}.
\end{proof}

\subsection{Completion of squares}
\begin{lemma} \label{lem:square}
  Let $x \in \X{}$, $u \in \U{}$, and $w \in \X{}$ be random variables defined
  on a common probability space. Suppose $w$ is zero mean with finite covariance 
  and independent of
  $(x,u)$. Let $x_{+} = Ax + Bu + w$, where $A$ and $B$ are matrices of
  appropriate dimensions. Then given matrices $P$, $Q$, $M$, and $R$ of
  appropriate dimensions,
  \begin{multline*}
    \EXP\Biggl[ \MATRIX{x \\ u}^\TRANS \MATRIX{ Q & M \\ M^\TRANS & R}
    \MATRIX{ x \\ u } + x_+^\TRANS P x_+ \Biggr]
    \\
    = \EXP\bigl[x^\TRANS P_{+} x + (u + Kx)^\TRANS \Delta (u + Kx)
      + w^\TRANS P w \bigr],
  \end{multline*}
  where
  \begin{align*}
    \Delta &= R + B^\TRANS P B , \\
    K &= \Delta^{-1} [ M^\TRANS + B^\TRANS P A ], \\
    P_{+} &= Q + A^\TRANS P A - K^\TRANS \Delta K.
  \end{align*}
\end{lemma}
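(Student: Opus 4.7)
The plan is the textbook completion of squares argument carried out at the level of expectations; since $w$ is zero-mean, finite-covariance, and independent of $(x,u)$, all cross-terms involving $w$ vanish after taking expectation and we can work purely algebraically.

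First I would expand $x_+^\TRANS P x_+ = (Ax+Bu+w)^\TRANS P (Ax+Bu+w)$ and take expectation. The cross terms $\EXP[x^\TRANS A^\TRANS P w]$, $\EXP[u^\TRANS B^\TRANS P w]$, and their transposes are all zero because $w$ is zero-mean and independent of $(x,u)$, so
\begin{equation*}
  \EXP[x_+^\TRANS P x_+] = \EXP[x^\TRANS A^\TRANS P A x + 2 x^\TRANS A^\TRANS P B u + u^\TRANS B^\TRANS P B u] + \EXP[w^\TRANS P w].
\end{equation*}
Adding this to the quadratic cost gives an expression of the form $\EXP[x^\TRANS (Q + A^\TRANS P A) x + 2 x^\TRANS (M + A^\TRANS P B) u + u^\TRANS (R + B^\TRANS P B) u] + \EXP[w^\TRANS P w]$, in which the $w$-contribution has been isolated and everything else is a deterministic quadratic form in $(x,u)$ inside an expectation.

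Next I would complete the square in $u$. Writing $\Delta = R + B^\TRANS P B$, the coefficient of the cross term $u^\TRANS(\cdot) x$ is $M^\TRANS + B^\TRANS P A = \Delta K$ by definition of $K$. Hence
\begin{equation*}
  u^\TRANS \Delta u + 2 u^\TRANS \Delta K x = (u+Kx)^\TRANS \Delta (u+Kx) - x^\TRANS K^\TRANS \Delta K x,
\end{equation*}
which is the only genuine computation in the proof and also the only place where $\Delta$ being well-defined (i.e., $R$ positive definite, guaranteed by (A3), so $\Delta$ invertible) matters. Substituting this identity back leaves the quadratic in $x$ alone equal to $x^\TRANS(Q + A^\TRANS P A - K^\TRANS \Delta K) x = x^\TRANS P_+ x$, exactly as defined, so assembling the pieces yields the claimed identity.

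There is no real obstacle beyond careful bookkeeping of the cross-terms: the independence of $w$ from $(x,u)$ plus zero-mean is used once to kill the $w$-linear terms, and after that the argument is deterministic algebra that can be done pointwise inside the expectation. I would keep the argument at the level of expectations throughout rather than attempting a pathwise identity, since the pathwise version would require $\EXP[w]=0$ to be replaced by moment manipulations, whereas the current form matches exactly what is needed to iterate the identity backward through time in the subsequent completion-of-squares step over the horizon.
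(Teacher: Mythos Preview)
Your proposal is correct and follows exactly the approach the paper takes: use independence and zero mean of $w$ to reduce $\EXP[x_+^\TRANS P x_+]$ to $\EXP[(Ax+Bu)^\TRANS P (Ax+Bu)] + \EXP[w^\TRANS P w]$, then complete the square in $u$. The paper's own proof simply states the first reduction and then says ``the result follows by expanding both sides and comparing coefficients,'' so your write-up is in fact a more detailed version of the same argument.
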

\begin{proof}
  Since $w$ is zero mean and independent of $(x,u)$, we have
  \[
    \EXP[ x_+^\TRANS P x_+] =
    \EXP\bigl[ (Ax + Bu)^\TRANS P (Ax+Bu) + w^\TRANS P w \bigr]
  \]
  The result follows by expanding both sides and comparing coefficients.
\end{proof}

By combining Lemmas~\ref{lem:cost}, \ref{lem:rewrite} and~\ref{lem:square},
we get the following.
\begin{lemma} \label{lem:square-step}
  For any strategy profile of the form~\eqref{eq:structure},
  \begin{align*}
    \EXP&\Bigl[  c_t(x_t, u_t) + \hat x_{t+1}^\TRANS P_{t+1} \hat x_{t+1}
      + \sum_{i \in \{1, 2\}} (\tilde x^i_{t+1})^\TRANS \tilde P^i_{t+1}
    \tilde x^i_{t+1} \Bigr] \notag \\
    &= \EXP\Bigl[ \hat x_t^\TRANS P_t \hat x_t +
        (\hat u_t + K_t \hat x_t)^\TRANS \Delta_t (\hat u_t + K_t \hat x_t)
    \notag \\
    &\qquad + \smashoperator[l]{\sum_{i \in \{1,2\}}}
    \Bigl[ (\tilde x^i_t)^\TRANS \tilde P^i_t \tilde x^i_t +
        (\tilde u_t + \tilde K_t \tilde x_t)^\TRANS \tilde \Delta_t (\tilde u_t + \tilde K_t \tilde
        x_t) \notag \\
    &\hskip 5em + (w^i_t)^\TRANS \tilde \Pi^i_{t+1} w^i_t \Bigr]\Bigr].
  \end{align*}
\end{lemma}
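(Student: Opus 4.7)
The plan is to combine Lemmas~\ref{lem:cost}, \ref{lem:rewrite}, and~\ref{lem:square} mechanically. After applying the first two, the left-hand side decouples into one ``common'' block and two ``local'' blocks, each of which is precisely in the form required by the completion-of-squares lemma. The only step that needs any thought is verifying the independence hypothesis of Lemma~\ref{lem:square} for the local blocks; the rest is routine bookkeeping.

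First I apply Lemma~\ref{lem:cost} to $\EXP[c_t(x_t, u_t)]$ to split the per-step cost into a quadratic form in $(\hat x_t, \hat u_t)$ with weight matrix $\SMATRIX{Q_t & M_t \\ M_t^\TRANS & R_t}$, plus, for each $i \in \{1,2\}$, a quadratic form in $(\tilde x^i_t, \tilde u^i_t)$ with the corresponding $(i,i)$-th block weights. Next I apply Lemma~\ref{lem:rewrite} to the $P_{t+1}$- and $\tilde P^i_{t+1}$-weighted future term, rewriting it as $\EXP[(\hat x^\off_{t+1})^\TRANS P_{t+1} \hat x^\off_{t+1} + \sum_i (\tilde x^{i,\off}_{t+1})^\TRANS \Pi^i_{t+1} \tilde x^{i,\off}_{t+1}]$. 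After these two substitutions, the left-hand side has separated into three independent blocks, each involving only ``common'' quantities or only ``local'' quantities for a single~$i$.

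I then apply Lemma~\ref{lem:square} block-by-block. For the common block, the dynamics $\hat x^\off_{t+1} = A \hat x_t + B \hat u_t$ from~\eqref{eq:x-hat-off} is deterministic given $(\hat x_t, \hat u_t)$, so the lemma applies trivially with $w = 0$ and $P = P_{t+1}$, producing $\hat x_t^\TRANS P_t \hat x_t + (\hat u_t + K_t \hat x_t)^\TRANS \Delta_t (\hat u_t + K_t \hat x_t)$ with $P_t$, $K_t$, and $\Delta_t$ exactly as in Theorem~\ref{thm:optimal}. For the $i$-th local block, the dynamics is $\tilde x^{i,\off}_{t+1} = A^{ii} \tilde x^i_t + B^{ii} \tilde u^i_t + w^i_t$ from~\eqref{eq:x-tilde-off}; here the hypothesis of Lemma~\ref{lem:square} requires $w^i_t$ to be zero-mean (immediate from (A2)) and independent of $(\tilde x^i_t, \tilde u^i_t)$. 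The latter is the main obstacle, and I would argue it as follows: under a strategy of the form~\eqref{eq:structure}, both $\tilde x^i_t$ and $\tilde u^i_t$ are measurable functions only of primitive random variables indexed strictly before time~$t$, and hence are independent of $w^i_t$ by (A1). Applying Lemma~\ref{lem:square} with $P = \Pi^i_{t+1}$ then yields $\tilde P^i_t$, $\tilde K^i_t$, and $\tilde \Delta^i_t$ as in Theorem~\ref{thm:optimal}, together with the residual noise term $(w^i_t)^\TRANS \Pi^i_{t+1} w^i_t$. Summing the one common and two local completed-squares expressions reproduces the claimed right-hand side.
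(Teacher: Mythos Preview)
Your proposal is correct and follows exactly the paper's approach: the paper's own proof of Lemma~\ref{lem:square-step} is the single sentence that it follows ``by combining Lemmas~\ref{lem:cost}, \ref{lem:rewrite} and~\ref{lem:square},'' and you have simply spelled out that bookkeeping in detail. One minor phrasing quibble: $(\tilde x^i_t, \tilde u^i_t)$ also depend on $\Gamma_{0:t}$ through $H^\com_t$, so ``indexed strictly before time~$t$'' is not quite accurate, but since $\Gamma_t$ is independent of $w^i_t$ by (A1) your independence conclusion still holds.
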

\begin{theorem} \label{thm:cost-to-go}
  For any strategy profile $\mathbf g$ of the form~\eqref{eq:structure},
  \begin{align}
  J(\mathbf g) = \EXP^{\mathbf g}&\Bigl[ \hat x_0^\TRANS P_{t} \hat x_0 +
    \sum_{i \in \{1, 2\}} (\tilde x^i_0)^\TRANS \tilde P^i_0 \tilde x^i_0
    \notag \\
    & + \sum_{s=0}^{T-1}
      (\hat u_s + K_s \hat x_s)^\TRANS \Delta_s (\hat u_s + K_s \hat x_s)
    \notag\\
    &  + \sum_{s=0}^{T-1} \sum_{i \in \{1, 2\}}
    (\tilde u^i_s + \tilde K^i_s \tilde x^i_s)^\TRANS
    \tilde \Delta^i_s (\tilde u^i_s + \tilde K^i_s \tilde x^i_s)
    \notag\\
    & + \sum_{s=0}^{T-1} \sum_{i \in \{1, 2\}} (w^i_s)^\TRANS \Pi^i_{t+1}
    w^i_s
  \Bigr], \label{eq:cost-to-go}
  \end{align}
  where
  \(
    \Delta_s = R_s + B^\TRANS P_{s+1} B
  \)
  and
  \(
    \tilde \Delta^i_s = R^{ii}_s + (B^{ii})^\TRANS \Pi^{i}_{s+1} B^{ii}
  \),
  \(
     i \in \{1,2\}.
  \)
\end{theorem}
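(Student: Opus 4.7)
The plan is to prove Theorem~\ref{thm:cost-to-go} by a backward induction (telescoping) argument in which Lemma~\ref{lem:square-step} plays the role of the recursive step and Lemma~\ref{lem:cost} handles the terminal cost. More concretely, I would define the auxiliary quantity
\[
  V_t \DEFINED \EXP^{\mathbf g}\Bigl[\hat x_t^\TRANS P_t \hat x_t + \sum_{i \in \{1,2\}} (\tilde x^i_t)^\TRANS \tilde P^i_t \tilde x^i_t \Bigr],
\]
and show that $V_T$ equals $\EXP^{\mathbf g}[c_T(x_T)]$, while, for each $t \in \{0, \dots, T-1\}$, $V_t$ equals $V_{t+1}$ plus $\EXP^{\mathbf g}[c_t(x_t, u_t)]$ minus the completion-of-squares residuals appearing in~\eqref{eq:cost-to-go}. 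Telescoping from $t = T$ down to $t = 0$ then yields the claimed identity, with the $V_T$ and $V_0$ terms giving, respectively, the rewriting of $c_T$ and the quadratic-in-initial-state terms $\hat x_0^\TRANS P_0 \hat x_0 + \sum_i (\tilde x^i_0)^\TRANS \tilde P^i_0 \tilde x^i_0$.

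For the base case, I would use the boundary conditions $P_T = Q_T$ and $\tilde P^i_T = Q^{ii}_T$ together with the orthogonal-decomposition identity~\eqref{eq:x-cost} from Lemma~\ref{lem:cost}, applied to the terminal cost $c_T(x_T) = x_T^\TRANS Q_T x_T$. This gives
\[
  \EXP^{\mathbf g}[c_T(x_T)] = \EXP^{\mathbf g}\Bigl[ \hat x_T^\TRANS P_T \hat x_T + \sum_{i \in \{1,2\}} (\tilde x^i_T)^\TRANS \tilde P^i_T \tilde x^i_T\Bigr] = V_T,
\]
so that $J(\mathbf g) = \EXP^{\mathbf g}\bigl[\sum_{t=0}^{T-1} c_t(x_t,u_t)\bigr] + V_T$. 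For the inductive step, I would invoke Lemma~\ref{lem:square-step} at time $t$, which rewrites $\EXP^{\mathbf g}[c_t(x_t,u_t)] + V_{t+1}$ exactly as $V_t$ plus the common-information squared term $(\hat u_t + K_t \hat x_t)^\TRANS \Delta_t (\hat u_t + K_t \hat x_t)$, the local squared terms $\sum_i (\tilde u^i_t + \tilde K^i_t \tilde x^i_t)^\TRANS \tilde\Delta^i_t (\tilde u^i_t + \tilde K^i_t \tilde x^i_t)$, and the noise term $\sum_i (w^i_t)^\TRANS \Pi^i_{t+1} w^i_t$. Iterating this relation from $t = T-1$ down to $t = 0$ converts $V_T$ into $V_0$ while accumulating, as a sum over $s \in \{0, \dots, T-1\}$, exactly the three families of residual terms that appear in~\eqref{eq:cost-to-go}.

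There is essentially no hard step left after Lemma~\ref{lem:square-step}: the only bookkeeping I have to be careful about is making sure that the telescoping ``closes'' at both endpoints, i.e., that the base case treatment of $c_T$ really matches the $V_T$ that is consumed at the first inductive step, and that $V_0$ appears as the $\hat x_0^\TRANS P_0 \hat x_0 + \sum_i (\tilde x^i_0)^\TRANS \tilde P^i_0 \tilde x^i_0$ contribution without any extra cross terms. Lemma~\ref{lem:structure} ensures there is no loss of optimality in restricting to the class~\eqref{eq:structure}, which is precisely the class for which Lemmas~\ref{lem:cost} and~\ref{lem:square-step} have been proved, so the proposition applies verbatim throughout the induction.
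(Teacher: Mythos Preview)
Your proposal is correct and is essentially the same argument as the paper's: both proofs use Lemma~\ref{lem:cost} for the base case at $t=T$ and Lemma~\ref{lem:square-step} for the backward recursion, with the only difference being cosmetic---the paper defines $V_t(\mathbf g)$ as the full expected cost-to-go and proves by induction that it equals the claimed expression, whereas you define $V_t$ as the state-quadratic piece and telescope. The remark about Lemma~\ref{lem:structure} at the end is superfluous here, since the theorem is already stated for strategies of the form~\eqref{eq:structure}; that reduction is used only later in the proof of Theorem~\ref{thm:optimal}.
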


\begin{proof}
For any strategy profile~$\mathbf g$, define the expected cost to go from time~$t$ onwards as
\begin{equation}
  V_t(\mathbf g) =
  \EXP^{\mathbf g} \Bigl[ \sum_{s = t}^{T-1} c_s(x_s, u_s) + c_T(x_T) \Bigr].
\end{equation}
We claim that 
  \begin{align}
    V_t(\mathbf g) = \EXP^{\mathbf g}&\Bigl[ \hat x_t^\TRANS P_{t} \hat x_t +
      \sum_{i \in \{1, 2\}} (\tilde x^i_t)^\TRANS \tilde P^i_t \tilde x^i_t
      \notag \\
      & + \sum_{s=t}^{T-1}
        (\hat u_s + K_s \hat x_s)^\TRANS \Delta_s (\hat u_s + K_s \hat x_s)
      \notag\\
      &  + \sum_{s=t}^{T-1} \sum_{i \in \{1, 2\}}
      (\tilde u^i_s + \tilde K^i_s \tilde x^i_s)^\TRANS
      \tilde \Delta^i_s (\tilde u^i_s + \tilde K^i_s \tilde x^i_s)
      \notag\\
      & + \sum_{s=t}^{T-1} \sum_{i \in \{1, 2\}} (w^i_s)^\TRANS \Pi^i_{t+1}
      w^i_s
    \Bigr]. \label{eq:claim}
  \end{align}

  We prove the claim by backward induction. For $t = T$, Lemma~\ref{lem:cost}
  implies that
  \[
    V_T(\mathbf g) = \EXP\Bigl[ \hat x_T^\TRANS Q_T \hat x_T +
    \sum_{i \in \{1, 2\}} (\tilde x^i_T)^\TRANS Q^{ii}_T \tilde x^i_T \Bigr].
  \]
  Eq.~\eqref{eq:claim} follows from the definition of $P_T$ and $\tilde P^i_T$.
  This forms the basis of induction. Now assume that~\eqref{eq:claim} is true for $t+1$
  and consider $V_t$. By definition, we have  
  \[
    V_t(\mathbf g) = \EXP^{\mathbf g}[ c_t(x_t, u_t) ] + V_{t+1}(\mathbf g)
  \]
  Using the expression for $V_{t+1}$ and  Lemma~\ref{lem:square-step}, we get the expression
  for $V_t$. This completes the induction step and proves the claim~\eqref{eq:claim}. 
  
  The result of the Theorem then follows from observing that $J(\mathbf g) = V_0(\mathbf g)$.
\end{proof}

\subsection{Proof of Theorem~\ref{thm:optimal}}
By Lemma~\ref{lem:structure}, there is no loss of optimality in restricting
attention to control strategy profile of the form~\eqref{eq:structure}. By Theorem~\ref{thm:cost-to-go},
the performance of a strategy of the form~\eqref{eq:structure} is given by~\eqref{eq:cost-to-go}.
Note that the first two and the last terms of~\eqref{eq:cost-to-go} are control free (i.e., they depend on only primitive random
variables). Thus, minimizing $J(\mathbf g)$ is equivalent to minimizing
\begin{align*}
  \tilde J(\mathbf g) = \EXP^{\mathbf g}\Bigl[
     \sum_{s=0}^{T-1} \Bigl[&
      (\hat u_s + K_s \hat x_s)^\TRANS \Delta_s (\hat u_s + K_s \hat x_s)
    \notag\\
    &  + \sum_{i \in \{1, 2\}}
    (\tilde u^i_s + \tilde K^i_s \tilde x^i_s)^\TRANS
    \tilde \Delta^i_s (\tilde u^i_s + \tilde K^i_s \tilde x^i_s)
\Bigr] \Bigr].
\end{align*}

By (A3), $R_t$ is symmetric and positive definite and therefore so is
$R^{ii}_t$. It can be shown recursively that $P_t$ and $\tilde P_t$ are
symmetric and positive semi-definite. Hence both $\Delta_t$ and $\tilde
\Delta^{i}_t$ are symmetric and positive definite. Therefore, $\tilde
J(\mathbf g) \ge 0$ with equality if and only if the strategy profile $\mathbf
g$ is given by Theorem~\ref{thm:optimal}.

\section{Discussion}

The model in~\cite{Asghari2018} consisted of $N$ local controllers and one
remote controller. We restricted our discussion to $N=2$. All steps of our
proof apart from Lemma~\ref{lem:rewrite} extend trivially to the case of
general~$N$. To extend Lemma~\ref{lem:rewrite} to the case of general $N$,
one can establish the following result.
\begin{lemma} \label{lem:generalize}
  For the system with general $N$, for any $\gamma = (\gamma^1, \dots,
  \gamma^n)$, $\gamma^i \in \{ 0,1 \}$, we have
  \begin{multline*}
    \EXP\Bigl[\hat x_{t+1}^\TRANS P_{t+1} \hat x_{t+1} + \sum_{i =1}^{N}
    (\tilde x^i_{t+1})^\TRANS \tilde P^i_{t+1} \tilde x^i_{t+1} \Bigm| \Gamma_{t+1} = \gamma\Bigr] \\
    = \EXP\Bigl[(\hat x^{\off}_{t+1})^\TRANS P_{t+1} \hat x^{\off}_{t+1}
      \hskip 8em \\
      + \sum_{i =1}^{N}
    (\tilde x^{i,\off}_{t+1})^\TRANS \Lambda^i_{t+1}(\gamma^i) \tilde
  x^{i,\off}_{t+1} \Bigm| \Gamma_{t+1} = \gamma\Bigr],
  \end{multline*}
where $\Lambda^i_{t+1}(\gamma^i) = \begin{cases}
\tilde P^i_{t+1}, & \text{if } \gamma^i = 0\\
P^{ii}_{t+1}  & \text{if } \gamma^i = 1.
\end{cases}$
\end{lemma}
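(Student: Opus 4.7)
The plan is to handle all $2^N$ realizations of $\gamma$ uniformly, rather than enumerate them one by one as in the proof of Lemma~\ref{lem:rewrite}. Partition the subsystem indices according to $\gamma$ as $S_0(\gamma) \DEFINED \{i \COLON \gamma^i = 0\}$ and $S_1(\gamma) \DEFINED \{i \COLON \gamma^i = 1\}$, and let $E_\gamma$ be the block-diagonal selector matrix whose $(i,i)$ block equals the identity of size $d_x^i$ when $i \in S_1(\gamma)$ and is zero otherwise. The $N$-subsystem version of Lemma~\ref{lem:estimates}, together with the definitions of $\hat x^{\off}$ and $\tilde x^{\off}$, yields, on the event $\{\Gamma_{t+1} = \gamma\}$, the identity $\hat x_{t+1} = \hat x^\off_{t+1} + E_\gamma \tilde x^\off_{t+1}$ together with $\tilde x^i_{t+1} = \tilde x^{i,\off}_{t+1}$ for $i \in S_0(\gamma)$ and $\tilde x^i_{t+1} = \mathbf{0}$ for $i \in S_1(\gamma)$.

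Substituting this decomposition into the LHS and expanding produces the quadratic form $(\hat x^\off_{t+1})^\TRANS P_{t+1} \hat x^\off_{t+1}$, a cross term $2(\hat x^\off_{t+1})^\TRANS P_{t+1} E_\gamma \tilde x^\off_{t+1}$, the quadratic form $(\tilde x^\off_{t+1})^\TRANS E_\gamma^\TRANS P_{t+1} E_\gamma \tilde x^\off_{t+1}$ involving only the blocks of $P_{t+1}$ indexed by $S_1(\gamma) \times S_1(\gamma)$, and the surviving portion of the second sum on the LHS, namely $\sum_{i \in S_0(\gamma)} (\tilde x^{i,\off}_{t+1})^\TRANS \tilde P^i_{t+1} \tilde x^{i,\off}_{t+1}$. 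Upon taking conditional expectation, the cross term is killed by (H5), and the off-diagonal blocks of the third term are killed by the $N$-variate extension of (H6); the diagonal remnants of that third term, $\sum_{i \in S_1(\gamma)} (\tilde x^{i,\off}_{t+1})^\TRANS P^{ii}_{t+1} \tilde x^{i,\off}_{t+1}$, combine with the $S_0(\gamma)$ sum to produce exactly $\sum_{i=1}^N (\tilde x^{i,\off}_{t+1})^\TRANS \Lambda^i_{t+1}(\gamma^i) \tilde x^{i,\off}_{t+1}$, which is what the lemma asserts.

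The only point that needs real care is the legitimacy of invoking (H5) and (H6) inside a conditional expectation given $\{\Gamma_{t+1} = \gamma\}$ rather than under an unconditional expectation. This is immediate from (A1): $\Gamma_{t+1}$ is independent of the primitives $\{x_0, w_{0\COLON t}, \Gamma_{0\COLON t}\}$ and hence of every variable appearing in those orthogonality identities, so the conditional and unconditional expectations agree. The $N$-variate analog of (H6), namely $\EXP[(\tilde x^{i,\off}_t)^\TRANS W \tilde x^{j,\off}_t] = 0$ for $i \neq j$, is obtained exactly as in the $N=2$ case, using the natural extension of Lemma~\ref{lem:indep} to mutual (rather than pairwise) conditional independence of $(x^1_t, \dots, x^N_t)$ given $H^\com_t$ and the consequent extension of Corollary~\ref{cor:indep}. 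Modulo this check, the proof is just careful bookkeeping.
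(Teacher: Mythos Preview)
Your argument is correct. Note that the paper does not actually supply a proof of Lemma~\ref{lem:generalize}; it merely states the result and remarks that Lemma~\ref{lem:rewrite} follows from it. The closest thing to a proof in the paper is the case-by-case treatment of Lemma~\ref{lem:rewrite} for $N=2$, which enumerates the four realizations of $(\Gamma^1_{t+1},\Gamma^2_{t+1})$ and invokes (H5)--(H6) in each.

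Your selector-matrix decomposition $\hat x_{t+1} = \hat x^\off_{t+1} + E_\gamma \tilde x^\off_{t+1}$ is simply the uniform packaging of those $2^N$ cases, using exactly the same ingredients (the $N$-subsystem form of Lemma~\ref{lem:estimates}, property (H5) for the $\hat x^\off$--$\tilde x^\off$ cross term, and the $N$-variate (H6) for the off-diagonal $\tilde x^{i,\off}$--$\tilde x^{j,\off}$ terms). So conceptually the approach is the same as the paper's, but your formulation scales to general $N$ without the exponential blow-up of explicit case analysis, which is precisely what the lemma is meant to accomplish. You are also more explicit than the paper about why (H5) and (H6) remain valid after conditioning on $\{\Gamma_{t+1}=\gamma\}$; the paper's proof of Lemma~\ref{lem:rewrite} silently uses this independence in each case, whereas you spell out that $\Gamma_{t+1}$ is independent of $(\hat x^\off_{t+1},\tilde x^\off_{t+1})$ via (A1).
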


Lemma~\ref{lem:rewrite} then follows from observing that
\begin{align*}
  &\sum_{(\gamma^1, \dots, \gamma^N) \in \{ 0,1\}^N } \PR(\Gamma^1_t =
  \gamma^1) \cdots \PR(\Gamma^N_t = \gamma^N) \Lambda^i_{t+1}(\gamma^i)\\
  &\qquad = \sum_{\gamma^i \in \{ 0,1\} } \PR(\Gamma^i_t = \gamma^i)  \Lambda^i_{t+1}(\gamma^i)\\
  &\qquad = p^i \tilde P^i_{t+1} + (1-p^i) P^{ii}_{t+1} \\
  &\qquad = \Pi^i_{t+1}.
\end{align*}

The proof of Theorem~\ref{thm:cost-to-go} is similar in spirit to the proof of
centralized linear quadratic control presented in~\cite{Astrom1970}. However,
due to decentralized information and the presence of unreliable communication
channels, the specific details are different. As far as we are aware, this is
the first paper which presents a methodology to synthesize optimal controllers
for dynamic \emph{decentralized} control systems without using a dynamic
programming or a spectral decomposition argument. In contrast to dynamic
programming based approaches, we sidestep the subtle measurability issues that
arise in common information based dynamic program for continuous state and
action spaces. In contrast to spectral decomposition based arguments, we do
not apriori restrict attention to linear strategies. We believe that the
solution approach presented in this paper is interesting in its own right and
may be applicable to other decentralized control problems as well.

\section*{Acknowledgment}

The authors are grateful to Ashutosh Nayyar and Yi Ouyang for helpful
feedback.

\bibliography{IEEEabrv,../../../References/mybib}
\bibliographystyle{IEEEtran}

\end{document}